\documentclass[journal]{IEEEtran}

\ifCLASSINFOpdf
\else
\fi

\usepackage{graphicx} 
\usepackage{epstopdf} 
\usepackage{mathptmx} 
\usepackage{times} 
\usepackage{amsmath} 
\usepackage{amsthm}
\usepackage{amssymb}  
\usepackage{txfonts}
\AtBeginDocument{\mathcode`v=\varv}
\usepackage{subfig}
\usepackage{cite}
\usepackage{color}
\usepackage{bm}
\usepackage{bbding,makecell}
\usepackage[font={small,sf}]{caption}

\DeclareMathOperator{\rank}{rank}
\newtheorem{theorem}{Theorem}
\newtheorem{remark}{Remark}

\newtheorem{definition}{Definition}
\newcommand{\Rmnum}[1]{\uppercase\expandafter{\romannumeral #1}}  

\begin{document}

\title{Correct Online Estimation of the Powertrain Time Constants in Adaptive Vehicular Platooning}

\author{Qiuhao Wen, Simone Baldi,~\IEEEmembership{Senior Member,~IEEE}, Jiwei Wang,\\  Wenwu Yu,~\IEEEmembership{Senior Member,~IEEE} and Di Liu,~\IEEEmembership{Member,~IEEE}
\vspace{-0.25cm}
\thanks{This work was partially supported by the National Natural Science Foundation of China under Grants  No. 62573115, 6251101378, 62233004, by Jiangsu Provincial Scientific Research Center of Applied Mathematics No. BK20233002, by UKRI Research and Innovation No. EP/Z002214/1, and by Horizon Europe Marie Sklodowska-Curie Action No. 101146446. (\textit{corresponding author: Di Liu})\newline
Q. Wen, S. Baldi, and W. Yu are with the Self-Organizing Mobility Control and Learning Lab, Southeast University, Nanjing, China. {\tt\small \{qiuhao.wen,s.baldi,wwyu\}@seu.edu.cn}\newline 
J. Wang is with Bernoulli Institute for Mathematics, Computer Science \& Artificial Intelligence, University of Groningen, The Netherlands. {\tt\small  jiwei.wang@rug.nl}\newline
D. Liu is with the Self-Organizing Mobility Control and Learning Lab, Southeast University, Nanjing, China, and also with the Department of Electrical and Electronic Engineering, Imperial College, London, United Kingdom.  {\tt\small di.liu@imperial.ac.uk}
}
}

%
%

\IEEEtitleabstractindextext{
\begin{abstract}
    In longitudinal platooning, some key sources of uncertainty are the powertrain time constants of the vehicles. 
    Because such time constants appear in the input matrix of the platooning dynamics, their correct estimation is either impractical with methods requiring persistence of excitation, or impossible with methods requiring the input matrix to be known. 
    This work proposes a novel adaptive longitudinal platooning method with correct estimation of the powertrain time constants. 
    To achieve correct estimation, the composite adaptive control framework and its stability analysis are suitably modified to handle the time constant uncertainty in the design of the adaptive law. 
    The result is a platooning protocol that guarantees convergence of the estimated time constants to their true values without the need for persistence of excitation: it is sufficient the derivative of the acceleration to be nonzero over a possibly short transient, an extremely relaxed excitation condition. 
    Comparisons with state-of-the-art platooning solutions reveal advantages such as no required measurements of acceleration derivative nor collection of past data. 
    The robustness and practicality of the proposed design is also verified with CarSim-based platooning experiments.
\vspace{-0.3cm}
\end{abstract}
\begin{IEEEkeywords}
Longitudinal platooning, connected vehicles, composite adaptive control, powertrain time constant, relaxed excitation condition. \vspace{-0.1cm}
\end{IEEEkeywords}
}

\maketitle

\thispagestyle{empty}
\pagestyle{empty}

\IEEEdisplaynontitleabstractindextext

%
\IEEEpeerreviewmaketitle

\section{Introduction}
Platoons of connected and automated vehicles have been judged as a promising solution to reduce traffic congestion and fuel consumption\cite{c18,c19,c34,c43,c46}.
Longitudinal platooning focuses on strings of vehicles coordinating in the traffic thanks to automated vehicle-following capabilities\cite{c20,c44}.
The objective is twofold: first, for each individual vehicle, guarantee stabilization of a desired inter-vehicle spacing policy\cite{c12,c35,c47}; second, for the whole platoon, guarantee rejection of disturbances that may propagate downstream along the vehicle string per effect of the vehicle-following interconnection.
The former is known as `individual stability', and the latter is known as `string stability'\cite{c22,c36}, for which different definitions have been proposed in the literature\cite{c21,c25,c37}. The vehicle dynamics in any platoon are heterogeneous and subject to uncertainty: 
platooning protocols based on robust control\cite{c67,c68,c69} may fail to attain individual and string stability over the whole uncertainty set. We focus on uncertainty scenarios requiring platooning protocols based on adaptive control, for which an overview of representative results is given hereafter.

\subsection{Related work on adaptive longitudinal platooning}
A key source of heterogeneity and uncertainty in longitudinal platooning is the powertrain time constant of each vehicle in the platoon. 
Different from systems modeled using first-, second-, or higher-order integrators\cite{c56,c57}, the powertrain time constant cannot be neglected in platooning systems, as it represents the response of the low-level control layer aiming to track a desired acceleration command\cite{c48}. 
Such time constant is also used to model the low-level control layer response in electric vehicles\cite{c59}.
As shown in~\cite{c61}, factors like vehicle type, mass, road topography, heavily affect the time constant, so that a platooning protocol must dynamically adapt to such time constant\cite{c47,c40,c45}.
While adaptive approaches based on neural networks\cite{c49,c70}, prescribed performance\cite{c51,c52} and adaptive consensus\cite{c53,c54} exist, a particularly convenient way to deal with heterogeneity and uncertainty of powertrain time constants is Model Reference Adaptive Control (MRAC)\cite{c7,c28}. 
The main idea of MRAC-based platooning is making the physical (actual) vehicle dynamics converge, through feedback control, to the dynamics of an ideal (virtual) vehicle which plays the role of a reference model chosen by the designer. 
Refer to Fig.~\ref{platoon} for a visual explanation.
Then, the adaptive platooning problem is split into two tasks: first, design a reference layer with desired stability properties (e.g., individual and string stability); second, by means of adaptive feedback\cite{c29}, let the physical layer track the reference layer so as to asymptotically attain the desired properties.
\begin{figure*}[thbp]
    \centering                        
    \includegraphics[width=1\linewidth]{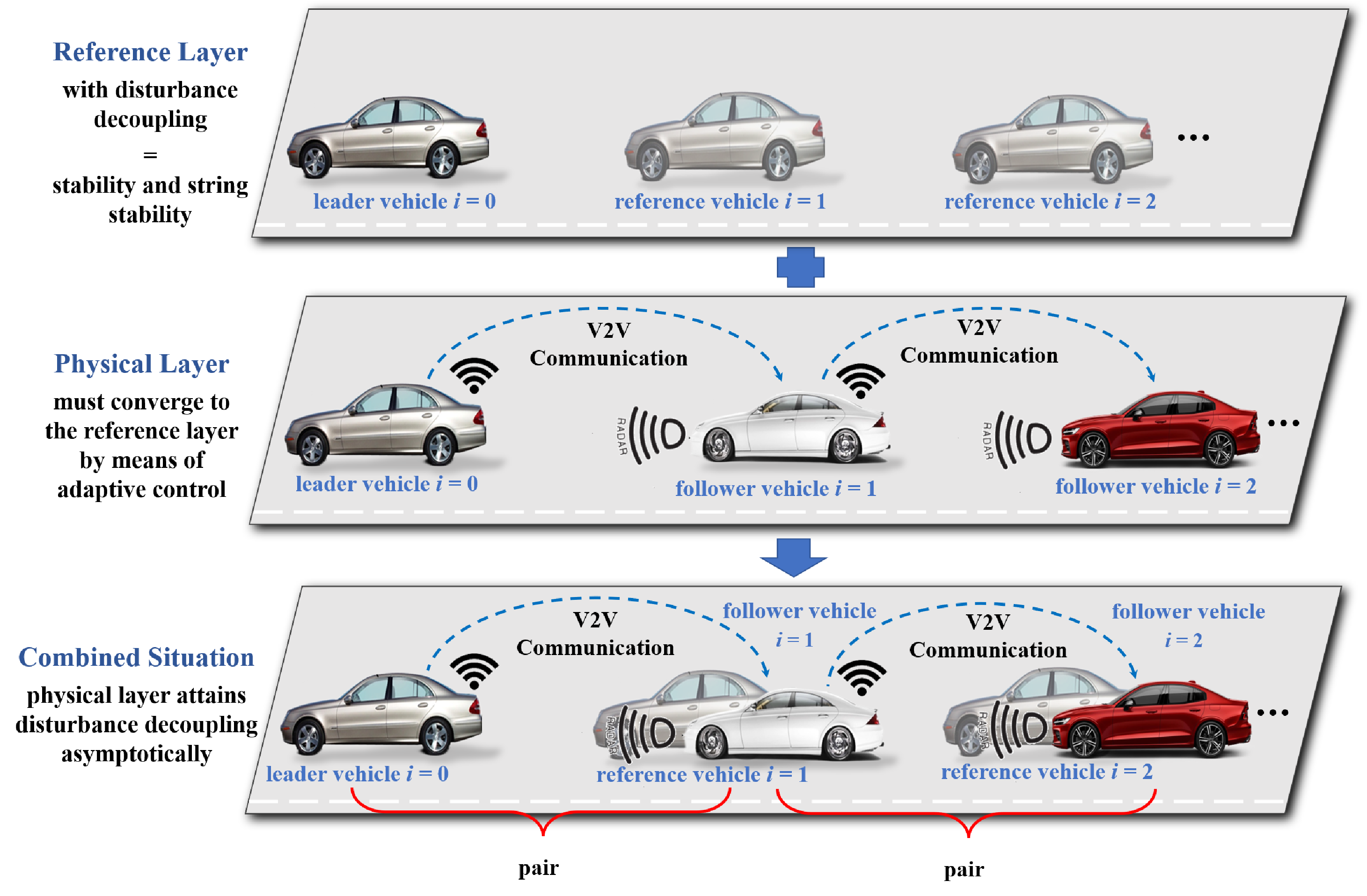}                          
    \caption{In cooperative platooning with the MRAC framework, there exist two layers: the reference layer and the physical layer.  
    We adopt a disturbance decoupling framework (cf. Remark 1) that allows to obtain both individual stability and string stability in a decoupled way, that is, by only considering predecessor-follower vehicle pairs.\label{platoon}}
\end{figure*}

Although adaptive platooning can guarantee asymptotic tracking of the reference layer, no guarantees can be given in general on the \emph{correct} estimation of the powertrain time constants for all vehicles, unless Persistence of Excitation (PE) conditions are satisfied\cite{c29}.
However, PE might be absent in many practical platooning scenarios, e.g., when the platoon proceeds at nearly-constant velocity (zero or almost zero acceleration): in these scenarios, asymptotic tracking of the reference layer is still achieved, but the estimated powertrain time constants may not converge to their \emph{correct} values. 
This has stimulated a search for adaptive control methods that could relax PE conditions.

\subsection{Related work on adaptive control with relaxed excitation}
Lack of convergence of the estimates to their true values gives rise to the well-known lack of robustness of adaptive control\cite{c29,c30}. 
The Concurrent Learning (CL) MRAC framework\cite{c31} is an approach introduced in the literature to guarantee correct estimation with relaxed PE conditions. 
In CL-MRAC, adaptation relies on both current and past data, and PE conditions are relaxed to rank conditions on such data.
However, CL-MRAC requires to measure extra state derivatives, namely, acceleration derivatives in the platooning case: it has been recognized that measuring such derivatives is unpractical\cite{c2}. 
Integral Concurrent Learning (ICL)\cite{c15} introduced a numerical integration to avoid such derivatives, with PE conditions being relaxed to integral rank conditions on data.
However, numerical integration can give rise to noise amplification. 
By replacing numerical integration with a suitably designed stable filter\cite{c32}, the Composite MRAC (C-MRAC) method removed the need for state derivatives and numerical integration, with PE conditions being relaxed to Finite Excitation (FE) conditions\cite{c17}.
However, as shown in\cite{c14}, the uncertainty in the powertrain time constants appears as an extra uncertainty term in the input matrix that cannot be handled by standard CL-MRAC, ICL-MRAC and C-MRAC designs.
This extra uncertainty term requires new tools both in the design of the adaptive law and in its stability analysis, leading to the main contributions of this study.

\subsection{Contribution of this study}
In this work we propose a novel platooning protocol in the framework of composite adaptive control. 
\begin{itemize}
    \item[1)] As compared to standard MRAC-based platooning designs\cite{c10,c33}, we relax PE conditions to FE conditions: to attain convergence of the estimated powertrain time constants to their true values, we only require 
		the derivative of the acceleration to be nonzero over a possibly short transient, a condition always met in practice, even in scenarios where the platoon converges to constant velocity;
    \item[2)] As compared to existing solutions based on concurrent learning\cite{c14}, the design we propose avoids the need for state derivatives (namely, acceleration derivatives) and avoids the need to collect past data;
    \item[3)] As compared to existing composite adaptive control methods\cite{c17}, we are able to include uncertainty in the input matrix, which allows to handle uncertainty in the powertrain time constants: we remark that existing methods are inapplicable to platooning because they require the input matrix to be known.
\end{itemize}
Extensive comparisons with the state-of-the-art designs clarify the advantages of the proposed protocol.
Its robustness and practicality 
are also verified with CarSim-based platooning experiments.

The rest of the paper is organized as follows: Section \Rmnum{2} gives the control problem statement.
After recalling state-of-the-art designs, the proposed composite adaptive platooning protocol is formulated in Section \Rmnum{3}.
Simulations and comparisons are in Section \Rmnum{4}, while Section \Rmnum{5} concludes the work.

{\it Notations}: 
$A^{\top}$ denotes the transpose of matrix $A$, $\bf I$ and $\bf 0$ denote the identity matrix and zero matrix of appropriate dimensions, ${\rm tr}(A)$ denotes the trace of matrix $A$. The 2-norm of vector $x$ is denoted by $\Vert x \Vert$,
$\lambda_{\min}(A)$ and $\lambda_{\max}(A)$ denote the smallest and the largest eigenvalue of matrix $A$, respectively.

\section{Control Problem Statement}
In line with the literature~\cite{c1,c2,c3,c4,c5}, the longitudinal platooning dynamics for a predecessor-follower vehicle system (the vehicles are indexed as $p$ and $f$, respectively) are expressed as:
\begin{equation}\label{standard dyanamics}
    \begin{aligned}
        \dot{s}_p(t)&=v_p(t),\\
        \dot{v}_p(t)&=a_p(t),\\
        \tau_p\dot{a}_p(t)&=-a_p(t)+u_p(t),
    \end{aligned}\quad
    \begin{aligned}
        \dot{s}_f(t)&=v_f(t),\\
        \dot{v}_f(t)&=a_f(t),\\
        \tau_f\dot{a}_f(t)&=-a_f(t)+u_f(t),
    \end{aligned}
\end{equation}
where $s_f(t),\ v_f(t),\ a_f(t)$ represent the longitudinal position, velocity, acceleration of the follower vehicle (and similar for the predecessor vehicle $p$), $u_f(t)$ is the desired acceleration representing the control input and $\tau_f>0$ is the powertrain time constant. 
Refer to~\cite{c61} for explanations on how road slope, rolling resistance and aerodynamic drag enter in (\ref{standard dyanamics}) from the low-level control layer and end up affecting the powertrain time constant in an uncertain way.
As the exact knowledge of $\tau_f$ is hardly available~\cite{c7,c8,c9},  all powertrain time constants in the platoon will be taken unknown in this work.

To establish the longitudinal platooning task, the spacing error is calculated based on the actual inter-vehicle distance $d_f(t)$ and the desired inter-vehicle distance $d_f^*(t)$, defined as
\begin{equation}
    e_f(t)=d_f(t)-d_f^*(t)=s_p(t)-s_f(t)-hv_f(t),\label{spacing error}
\end{equation}
where $h>0$ is the time headway. 
To represent the dynamics of the predecessor-follower system, let us define the state $x(t)={[e_f(t)\ \nu_f(t)\ a_f(t)]}^\top\in\mathbb{R}^3$, where $\nu_f(t)=v_p(t)-v_f(t)$ is the relative velocity.
Based on (\ref{standard dyanamics}), the state dynamics can be written as
\begin{equation}
    \dot{x}(t)=A x(t) + B u_f(t) + G a_p(t),
    \label{closed loop dynamics}
\end{equation}
with 
\begin{equation*}
    A=\begin{bmatrix}
        0 &1 &-h\\
        0 &0 &-1\\
        0 &0 &-\tau_f^{-1}
    \end{bmatrix},\ \
    B=\begin{bmatrix}
        0\\
        0\\
        \tau_f^{-1}
    \end{bmatrix},\ \
    G=\begin{bmatrix}
        0\\
        1\\
        0
    \end{bmatrix}.
    \label{parameters of the closed loop dynamics}
\end{equation*}
Note that the predecessor acceleration $a_p(t)$ (or, indirectly, the predecessor input $u_p(t)$) enters the dynamics (\ref{closed loop dynamics}) as an exogenous term, and thus plays a similar role as a disturbance.
Next, the adaptive platooning problem will be formulated in the framework of MRAC. 

\subsection{Adaptive Platooning as MRAC}
The basic idea behind MRAC is to make the state of the system (in our case, the predecessor-follower system (\ref{closed loop dynamics})) follow the one of a reference model.

To design a reference model with desired properties, a possible way is to define a `virtual' following vehicle, indexed as $\bar{f}$,
\begin{equation}\label{reference model ogn}
    \begin{aligned}
        \dot{s}_{\bar{f}}(t)&=v_{\bar{f}}(t),\\
        \dot{v}_{\bar{f}}(t)&=a_{\bar{f}}(t),\\
        \tau_{\bar{f}} \dot{a}_{\bar{f}}(t)&=-a_{\bar{f}}(t)+u_{\bar{f}}(t),
    \end{aligned}
\end{equation}
with desired $\tau_{\bar{f}}>0$.

Similar to (\ref{spacing error}), a reference spacing error and a reference relative velocity can be defined as $e_{\bar{f}}(t)=s_p(t)-s_{\bar{f}}(t)-hv_{\bar{f}}(t)$ and $\nu_{\bar{f}}(t)=v_p(t)-v_{\bar{f}}(t)$, leading to the reference state $\bar{x}(t)={[e_{\bar{f}}(t)\ \nu_{\bar{f}}(t)\ a_{\bar{f}}(t)]}^\top$.
The literature has shown that a suitable $u_{\bar{f}}(t)$ for the virtual vehicle can be designed in the disturbance decoupling framework~\cite{c10}, taking the form 
\begin{equation}
    u_{\bar{f}}(t)=[\theta_1\ \ \theta_2\ \ 1\hspace{-0.075cm}-\hspace{-0.075cm}\tau_{\bar{f}} h^{-1}\hspace{-0.1cm}-\hspace{-0.075cm}h\theta_2]\bar{x}(t)+\tau_{\bar{f}} h^{-1}a_p(t),\label{reference model input}
\end{equation}
where $\theta_1,\theta_2>0$.
The reference model is obtained from the closed loop of the virtual vehicle $\bar{f}$ and its controller (\ref{reference model input}), leading to
\begin{equation}
    \dot{\bar{x}}(t) =\bar{A}\bar{x}(t) + \bar{G}a_p(t),
    \label{reference model}
\end{equation}
where
\begin{equation}
    \bar{A}=\begin{bmatrix}
        0 &1 &-h\\
        0 &0 &-1\\
        \theta_1\tau_{\bar{f}}^{-1} &\theta_2\tau_{\bar{f}}^{-1} &-h^{-1}\hspace{-0.075cm}-\hspace{-0.075cm}h\theta_2\tau_{\bar{f}}^{-1}
    \end{bmatrix},\ \
    \bar{G}=\begin{bmatrix}
        0\\
        1\\
        h^{-1}
    \end{bmatrix}.
    \label{reference model matrices}
\end{equation}
Note that $\bar{A}$ is Hurwitz for any $\tau_{\bar{f}}, \theta_1, \theta_2 > 0$. 
Similar to (\ref{closed loop dynamics}), $a_p(t)$ enters in (\ref{reference model}) as an exogenous disturbance term.
The controller (\ref{reference model input}) takes the name of disturbance decoupling controller as it lets the reference model (\ref{reference model}) enjoy system properties known in the literature~\cite{c12,c13} and remarked hereafter.
\begin{remark}[Properties of the reference model]\label{Remark of properties of the reference model}
    The reference dynamics (\ref{reference model})-(\ref{reference model matrices}) are such that:
    \begin{enumerate}
        \item For any bounded $u_p(\cdot)$, the following \emph{disturbance decoupling output stability} property holds
        \begin{equation*}
            \lim_{t\to\infty}e_{\bar{f}}(t)=0;
        \end{equation*} 
        \item For any bounded $u_p(\cdot)$, the following \emph{string stability} property holds
        \begin{equation}\label{L2 string stability}
           \lim_{t\to\infty}\int_t^{t+T}(\lvert \nu_{\bar{f}}(\ell) \rvert ^2 -\lvert \nu_p(\ell) \rvert ^2) d \ell \le 0,
        \end{equation}
        for all $T>0$.
    \end{enumerate}
\end{remark}

The first item states that the spacing error converges to zero independently of the predecessor driving strategy: in other words, the spacing error will be decoupled from the effect of the disturbance.
The second item defines string stability in the sense of $\mathcal{L}_2$ norm on a certain interval. 
By interpreting the $\mathcal{L}_2$ norm as the energy of a signal, (\ref{L2 string stability}) states that the energy of $\nu_{\bar{f}}$ is not larger than the energy of $\nu_p$: as a result, the effect of a disturbance will not be amplified from the predecessor to the (virtual) follower.

Next, we aim to design an appropriate controller $u_f(t)$ so that the state $x(t)$ can track the reference state $\bar{x}(t)$ and converge to similar properties as in Remark~\ref{Remark of properties of the reference model}. 
To elaborate on the design of such controller, $\tau_f$ is assumed known first.
If $\tau_f$ is known, consider the ideal controller
\begin{equation}\label{ideal controller}
    \begin{aligned}
        u_f^*(t) =& a_f(t)+\tau_f h^{-1} a_p(t)\\
        &+\tau_f[\theta_1\tau_{\bar{f}}^{-1}\ \theta_2\tau_{\bar{f}}^{-1}\ -\hspace{-0.075cm}(h^{-1}\hspace{-0.075cm}+\hspace{-0.075cm}h\theta_2\tau_{\bar{f}}^{-1})] x(t),        
    \end{aligned}
\end{equation}
where $u_f^*(t)$ stands for an ideal version of $u_f(t)$.
In the following, for brevity, let us denote 
\begin{equation}\label{regressor}
    \phi(t)=[\theta_1\tau_{\bar{f}}^{-1}\ \theta_2\tau_{\bar{f}}^{-1}\ -\hspace{-0.075cm}(h^{-1}\hspace{-0.075cm}+\hspace{-0.075cm}h\theta_2\tau_{\bar{f}}^{-1})] x(t) + h^{-1} a_p(t).    
\end{equation}
The controller (\ref{ideal controller}) is ideal because by substituting it into (\ref{closed loop dynamics}), together with (\ref{reference model}) and (\ref{reference model matrices}), we have
\begin{equation}
    \dot{\tilde{x}}(t)=\bar{A}\tilde{x}(t),
    \label{ideal tracking error}
\end{equation}
where $\tilde{x}(t)=x(t)-\bar{x}(t)$ is the tracking error between the closed-loop system and the reference model. 
In (\ref{ideal tracking error}), $\tilde{x}(t)$ converges to zero as $\bar{A}$ is Hurwitz,
which means that the closed-loop system converges to the same behavior of the reference model.

However, as $\tau_f$ in (\ref{ideal controller}) is unknown, let us propose, instead of (\ref{ideal controller}), an adaptive controller given by
\begin{equation}
    u_f(t)=a_f(t)+\hat{\tau}_f(t)\phi(t),
    \label{actual controller}
\end{equation}
with $\hat{\tau}_f(t)$ being an estimate of $\tau_f$.
The dynamics of $\tilde{x}(t)$, resulting from (\ref{closed loop dynamics}) and (\ref{reference model}) with controller (\ref{actual controller}), are
\begin{equation}
    \dot{\tilde{x}}(t)=\bar{A}\tilde{x}(t)+\tilde{B}\frac{\tilde{\tau}_f(t)}{\tau_f}\phi(t),
    \label{actual tracking error}
\end{equation}
where $\tilde{\tau}_f(t)=\hat{\tau}_f(t)-\tau_f$ is the estimation error and $\tilde{B}^\top=[ 0\ 0 \ 1]$.

Note that (\ref{actual tracking error}) contains the same ideal dynamics of (\ref{ideal tracking error}), plus a disturbance term given by the estimation error. 
The problem is to design an adaptive law for $\hat{\tau}_f(t)$ to eliminate the disturbance in (\ref{actual tracking error}).

The standard MRAC\cite{c10} adjusts the estimate $\hat{\tau}_f(t)$ via the following adaptive law:
\begin{equation}\label{standard MRAC adaptive law}
    \dot{\hat{\tau}}_f(t)=-\gamma \tilde{B}^\top P\tilde{x}(t)\phi(t),
\end{equation}
where $\gamma>0$ is a learning gain and $P=P^\top$ is the positive definite solution of the Lyapunov equation $\bar{A}^\top P+P\bar{A}+Q=0$ with $Q>0$ being an arbitrary positive definite matrix.
Our goal is to propose a different solution than (\ref{standard MRAC adaptive law}), based on the problem formulation introduced hereafter.

\subsection{Control Problem}
Before formulating the control problem, the following different definitions about exciting signals are introduced.
\begin{definition}[Persistent Excitation (PE)\cite{c29}]\label{PE definition}
    Given a signal $\Phi(\cdot) \in\mathbb{R}^m$, if there exist $T > 0$ and $\alpha > 0$ such that
    \begin{equation*}
        \int_{t}^{t+T} \Phi(\ell)\Phi^\top(\ell) d\ell\ge\alpha {\bf I}>{\bf 0},\quad \forall t\ge 0,      
    \end{equation*}
    then the signal $\Phi(\cdot)$ satisfies persistent excitation.
\end{definition}
\begin{definition}[Rank Excitation (RE)\cite{c31}]\label{RE definition}
Given a set of recorded data
 $$Z = [\Phi(t_1),\dots,\Phi(t_k)] \in \mathbb{R}^{m \times k},$$
 if $\rank(Z) = m$ (equivalently, $Z Z^\top \geq \alpha {\bf I} > 0$ for some $\alpha>0$), then the recorded data satisfies rank excitation.
\end{definition}
\begin{definition}[Integral Rank Excitation (IRE)\cite{c15}]
Given a set of recorded data $Z = [\Phi(t_1),\dots,\Phi(t_k)] \in \mathbb{R}^{m \times k}$, if there exist $\Delta t>0$ and $\alpha>0$, such that $\bar{Z}\bar{Z}^\top\ge\alpha {\bf I}>0$, where 
$$\bar{Z}=[{\int_{\max\{t_1-\Delta t, 0\}}^{t_1}\Phi(\ell)}d\ell, \dots, {\int_{\max\{t_k-\Delta t, 0\}}^{t_k}\Phi(\ell)}d\ell],$$
then the recorded data satisfies integral rank excitation.
\end{definition}    
\begin{definition}[Finite Excitation (FE)\cite{c17}]\label{FE definition}
    Given a signal $\Phi(\cdot) \in\mathbb{R}^m$, if there exist time instants $t_e>t_s>0$, and $\alpha > 0$ such that
    \begin{equation*}
        \int_{t_s}^{t_e} \Phi(\ell)\Phi^\top(\ell) d\ell\ge\alpha {\bf I}>{\bf 0},     
    \end{equation*}
    then the signal $\Phi(\cdot)$ satisfies finite excitation over $[t_s, t_e]$.
\end{definition}

It is straightforward that Persistent Excitation implies Finite Excitation. 
Meanwhile, it is not difficult to show that Rank Excitation implies Integral Rank Excitation: this holds because $\rank(Z)=m$ implies $Z Z^\top \ge \alpha {\bf I}$ for some $\alpha >0$, a special case of Integral Rank Excitation for $\Delta t \rightarrow  0$.
Let us also mention that Finite Excitation is closely related to Interval Excitation (IE)\cite{c65,c66,c71,c72}: we do not cover such approaches\cite{c72,c71} because their problem formulation is based on obtaining scalar regressor models, but $\phi$ is scalar by design here.
The following control problem is considered in this work.

\textbf{Control Problem:}
Consider the predecessor-follower model (\ref{standard dyanamics}) with spacing error (\ref{spacing error}).
The control problem is to design an adaptive law for $\hat{\tau}_f(t)$ in the controller $u_f(t)$ (\ref{actual controller}) such that, for any unknown $\tau_f,\ \tau_p$ and any bounded $u_p(\cdot)$
\begin{equation}\label{x convergence}
\lim_{t \rightarrow \infty} x(t)-\bar{x}(t) = 0,
\end{equation}
which implies, from Remark 1, tracking a stable and string stable behavior.
In addition, the adaptive controller should guarantee convergence of the estimates, i.e.,
\begin{equation}\label{tau convergence}
\lim_{t \rightarrow \infty} \hat{\tau}_f (t)-\tau_f = 0,
\end{equation}
without PE and without extra state measurements as compared to those needed in the control law. $\hfill \diamond$

None of the existing adaptation solutions solves the control problem: namely, the standard MRAC law (\ref{standard MRAC adaptive law}) requires PE to guarantee that the estimate converges to the actual $\tau_f$\cite{c29} and thus cannot solve the control problem.
Meanwhile, CL-MRAC can relax PE to RE at the price of requiring extra state measurements\cite{c14}, thus also not solving the control problem. 

\section{Composite Adaptive Platooning}
This section will first formulate a CL-MRAC design for platooning, which serves as a basis for a novel adaptive platooning protocol, derived in the framework of C-MRAC.
\begin{figure*}[thbp]
    \centering
    \includegraphics[width=1\linewidth]{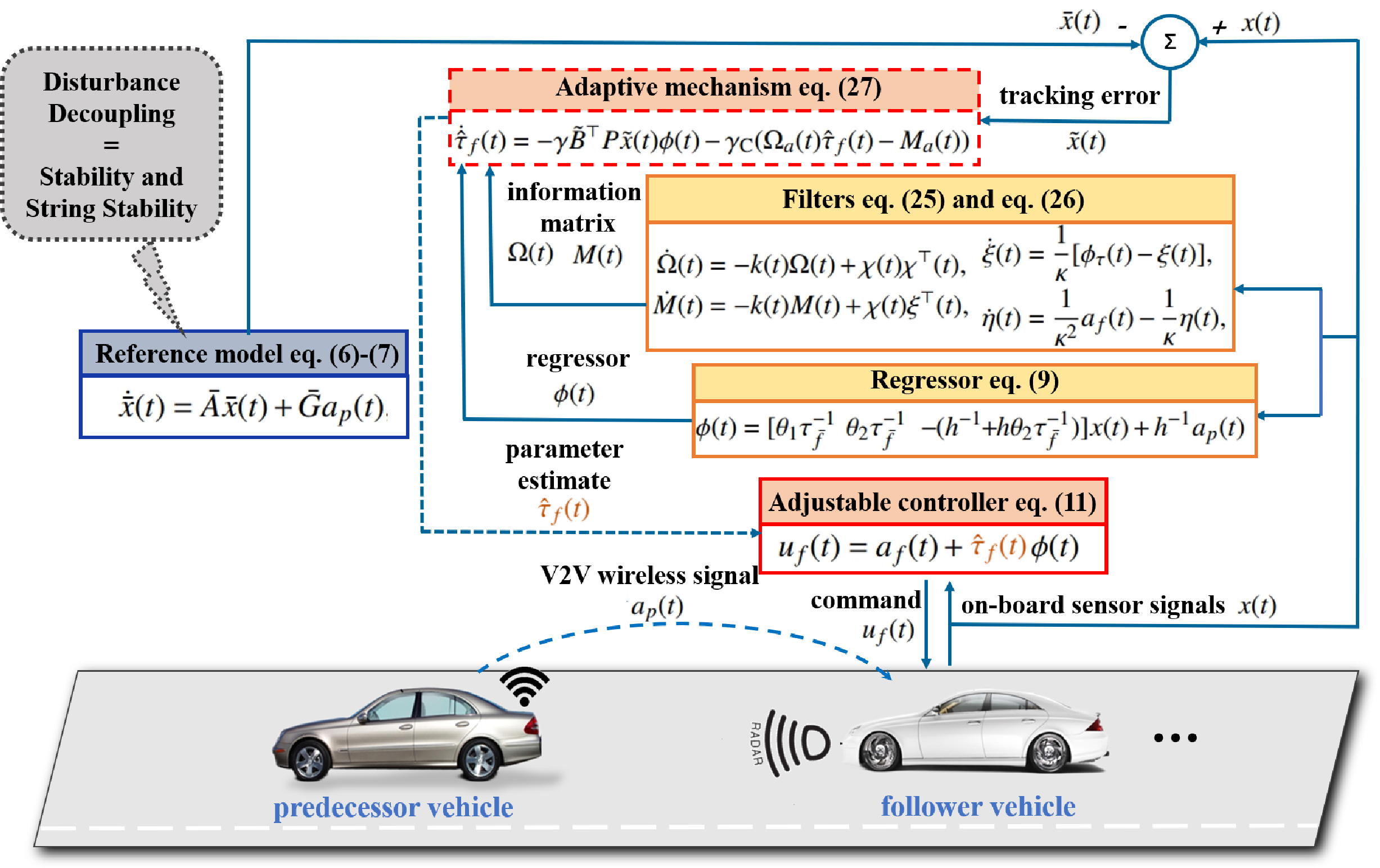}
    \caption{Framework of the proposed C-MRAC for platooning: by making use of disturbance decoupling as in Remark 1, we enforce stability and string stability in the reference model. 
    By asymptotically tracking the reference model, the predecessor-follower system asymptotically attains stability and string stability. 
    Asymptotic tracking is made possible by an adjustment mechanism to estimate the uncertain powertrain time constant: here, filtering avoids the need to measure acceleration derivative.\label{framework}}
\end{figure*}
\subsection{Concurrent Learning Design}
The principle of CL is using current data $\phi(x(t),a_p(t))$ and a set of recorded data $\phi(t_j) = \phi(x(t_j),a_p(t_j)),\ j\in \{1,2,\dots,k\}$ concurrently for adaptation to achieve convergence of the estimates to their true values.
The following results remind of CL-MRAC in\cite{c14}, but a different design of the adaptive law is used. 
Due to the different design, it is convenient to elaborate on its stability analysis.
\begin{theorem}\label{CL theorem}
Consider the predecessor-follower model (\ref{standard dyanamics}) with spacing error (\ref{spacing error}). 
The controller $u_f$ as in (\ref{actual controller}) with CL-MRAC adaptive law 
\begin{equation}\label{standard CL adaptive law}
    \begin{aligned}
        \dot{\hat{\tau}}_f(t)=&-\gamma \tilde{B}^\top P\tilde{x}(t)\phi(t)\\
        &-\gamma_{{\rm CL}}\sum_{j=1}^{k}\frac{\phi(t_j)}{\hat{\tau}_f(t_j)}(\hat{\tau}_f(t)\dot{a}_f(t_j)-\hat{\tau}_f(t_j)\phi(t_j)),
    \end{aligned}
\end{equation}
where $\gamma_{{\rm CL}}>0$ is another learning gain,
guarantees, for any unknown $\tau_p,\ \tau_f$ and any bounded $u_p(\cdot)$, that (\ref{x convergence}) holds;
in addition, given a set of recorded data $Z = [\phi(t_1),\dots,\phi(t_k)] \in \mathbb{R}^{1 \times k}$, if $\rank(Z)=1$, i.e. RE holds, then (\ref{tau convergence}) can be achieved.
\end{theorem}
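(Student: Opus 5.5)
The plan is a Lyapunov argument in which the concurrent-learning term turns out to be an exact negative feedback on $\tilde{\tau}_f$. The first step is to rewrite the recorded-data term using the dynamics (\ref{standard dyanamics}) with the controller (\ref{actual controller}). At each recorded instant $t_j$ the follower satisfies $\tau_f\dot{a}_f(t_j)=-a_f(t_j)+u_f(t_j)=\hat{\tau}_f(t_j)\phi(t_j)$, hence $\dot{a}_f(t_j)=\hat{\tau}_f(t_j)\phi(t_j)/\tau_f$. Substituting this into (\ref{standard CL adaptive law}) collapses each summand to
\begin{equation*}
\frac{\phi(t_j)}{\hat{\tau}_f(t_j)}\Big(\hat{\tau}_f(t)\frac{\hat{\tau}_f(t_j)\phi(t_j)}{\tau_f}-\hat{\tau}_f(t_j)\phi(t_j)\Big)=\frac{\phi^2(t_j)}{\tau_f}\,\tilde{\tau}_f(t).
\end{equation*}
This requires $\hat{\tau}_f(t_j)\neq 0$ at the recorded instants. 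I will state this as a standing requirement on the stored data (for instance, data are only recorded when the estimate is nonzero, or a projection keeps $\hat{\tau}_f$ positive). Consequently $\dot{\tilde{\tau}}_f=-\gamma\tilde{B}^\top P\tilde{x}\phi-\gamma_{\rm CL}\tau_f^{-1}\sum_{j}\phi^2(t_j)\tilde{\tau}_f$, where the sum is empty, or frozen once recording ends.

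Next I take $V=\tilde{x}^\top P\tilde{x}+\tilde{\tau}_f^2/(\gamma\tau_f)$, which is positive definite since $\tau_f>0$. Differentiating along (\ref{actual tracking error}) and using the Lyapunov equation gives $\dot V=-\tilde{x}^\top Q\tilde{x}+2\tilde{x}^\top P\tilde{B}\tilde{\tau}_f\phi/\tau_f+2\tilde{\tau}_f\dot{\tilde{\tau}}_f/(\gamma\tau_f)$. The cross terms cancel exactly by the symmetry of $P$, leaving
\begin{equation*}
\dot V=-\tilde{x}^\top Q\tilde{x}-\frac{2\gamma_{\rm CL}}{\gamma\tau_f^2}\Big(\sum_{j=1}^k\phi^2(t_j)\Big)\tilde{\tau}_f^2\le -\tilde{x}^\top Q\tilde{x}\le 0 .
\end{equation*}
The recorded set may change at finitely many instants; between such instants the sum is constant and nonnegative, so the inequality holds piecewise, and $V$ stays nonincreasing because $V$ itself does not jump.

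To prove (\ref{x convergence}), first note that $V$ nonincreasing makes $\tilde{x}$ and $\tilde{\tau}_f$ bounded and gives $\tilde{x}\in\mathcal{L}_2$. For bounded $u_p$ and $\tau_p>0$, the predecessor acceleration $a_p$ is bounded, so $\bar{x}$ is bounded because $\bar{A}$ is Hurwitz. Hence $x=\tilde{x}+\bar{x}$ and $\phi$ are bounded, and so is $\dot{\tilde{x}}$ from (\ref{actual tracking error}). Barbalat's lemma then yields $\tilde{x}\to0$.

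For (\ref{tau convergence}), under RE we have $\rank(Z)=1$, so $\sigma:=\sum_j\phi^2(t_j)>0$ once the data set is complete. From then on $\dot V\le-\min\{\lambda_{\min}(Q)/\lambda_{\max}(P),\,2\gamma_{\rm CL}\sigma/\tau_f\}\,V$, so $V$, and with it $\tilde{\tau}_f$ and $\tilde{x}$, decays exponentially. The main obstacle is the first step: the substitution of $\dot{a}_f(t_j)$ and the need to exclude $\hat{\tau}_f(t_j)=0$. Everything else is standard.
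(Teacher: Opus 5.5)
Your proposal is correct and follows essentially the same route as the paper. Both use the same Lyapunov function (yours is scaled by a factor of 2), the same substitution $\tau_f\dot a_f(t_j)=\hat\tau_f(t_j)\phi(t_j)$ that turns the concurrent-learning term into $-\gamma_{\rm CL}\tau_f^{-1}\sum_j\phi^2(t_j)\tilde\tau_f$, Barbalat's lemma for $\tilde x\to0$, and exponential decay of $V$ once $\sum_j\phi^2(t_j)>0$. Your explicit handling of $\hat\tau_f(t_j)\neq0$ and of data recorded over time covers points the paper's proof leaves implicit.
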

\begin{proof}
    Consider the following positive definite and radially unbounded Lyapunov candidate function:
    \begin{equation}\label{Lyacan}
        V(\tilde{x},\tilde{\tau}_f)=\frac{1}{2}\tilde{x}^\top P \tilde{x}+\frac{\tilde{\tau}_f^2}{2\gamma\tau_f}.
    \end{equation}

    It is obvious that $V({\bm0},0)=0$ and $V(\tilde{x},\tilde{\tau}_f)>0,\ \forall(\tilde{x},\tilde{\tau}_f)\neq({\bm0},0)$.
    Let $\psi={[\tilde{x}^{\top}\ \tilde{\tau}_f]}^\top$.
    Then, from (\ref{Lyacan}) we have
    \begin{equation}
        \begin{aligned}\label{Lyacan inequality}
            \frac{1}{2}{\min}\{\lambda_{\min}(P),\frac{1}{\gamma\tau_f}\}{\Vert\psi\Vert}^2 &\leq V(\tilde{x},\tilde{\tau}_f)\\ 
            &\leq \frac{1}{2}{\max}\{\lambda_{\max}(P),\frac{1}{\gamma\tau_f}\}{\Vert\psi\Vert}^2.
        \end{aligned}
    \end{equation} 

    Next, from (\ref{actual tracking error}), one can derive that
    \begin{equation}\label{identity}
        \begin{aligned}
            \dot{a}_f(t)&=\theta_1\tau_{\bar{f}}^{-1}e_f(t)+\theta_2\tau_{\bar{f}}^{-1}\nu_f(t)-(h^{-1}+h\theta_2\tau_{\bar{f}}^{-1})a_f(t)\\
                        &\quad+h^{-1}a_p(t)+\tau_f^{-1}\tilde{\tau}(t)\phi(t)\\
                        &=\phi(t)+\tau_f^{-1}\hat{\tau}_f(t)\phi(t)-\phi(t)=\tau_f^{-1}\hat{\tau}_f(t)\phi(t),
        \end{aligned}
    \end{equation}
    which implies, for any of the recorded data,
    \begin{equation*}
        \frac{\phi(t_j)}{\tau_f}=\frac{\dot{a}_f(t_j)}{\hat{\tau}_f(t_j)},\ j\in\{1,\dots,k\}.
    \end{equation*}
    Note that 
    \begin{equation*}
        \begin{aligned}
            \tau_f^{-1}\tilde{\tau}_f(t)\phi(t_j)&=\tau_f^{-1}(\hat{\tau}_f(t)\phi(t_j)-\tau_f\phi(t_j))\\
                                                 &=\hat{\tau}_f(t)\frac{\dot{a}_f(t_j)}{\hat{\tau}_f(t_j)}-\phi(t_j),
        \end{aligned}
        \quad j\in\{1,\dots,k\}.
    \end{equation*}
    The adaptive law (\ref{standard CL adaptive law}) can be rewritten as
    \begin{equation}\label{standard CL law for proof}
        \dot{\hat{\tau}}_f(t)=-\gamma\tilde{B}^\top P\tilde{x}(t)\phi(t)-\gamma_{\rm CL}\tau_f^{-1}\sum_{j=1}^{k}{\phi^2(t_j)}\tilde{\tau}_f(t).
    \end{equation}

    Utilizing (\ref{actual tracking error}) and (\ref{standard CL law for proof}), the time derivative of (\ref{Lyacan}) along the trajectory can be obtained as
    \begin{equation}\label{derivative of CL Lyacan}
        \dot{V}(\tilde{x},\tilde{\tau}_f) = -\frac{1}{2}\tilde{x}^\top Q \tilde{x} - \gamma_{\rm CL}\frac{{\tilde{\tau}_f}^2}{\gamma\tau_f^2}\sum_{j=1}^{k}{\phi^2(t_j)}.
    \end{equation}
    Even in the absence of RE, (\ref{derivative of CL Lyacan}) is negative semidefinite, from which one can show convergence of $\tilde{x}$ using Barlalat's lemma as in standard MRAC. 
    If, in addition, $\phi(\cdot)$ satisfies RE, let $\Theta=\sum_{j=1}^{k}{\phi^2(t_j)}$ for which we have $\Theta>0$, leading to
    \begin{equation}
        \begin{aligned}\label{derivative of CL Lyacan inequality}
            \dot{V}(\tilde{x},\tilde{\tau}_f)&\le -\frac{1}{2}\lambda_{\min}(Q){\Vert \tilde{x} \Vert}^2-\gamma_{\rm CL}\frac{1}{\gamma\tau_f^2}\Theta{\tilde{\tau}_f}^2\\
            &\le -\frac{1}{2} \min\{\lambda_{\min}(Q),\ 2\gamma_{\rm CL}\frac{1}{\gamma\tau_f^2}\Theta\}{\Vert\psi\Vert}^2.
        \end{aligned}
    \end{equation}
    From (\ref{Lyacan inequality}) and (\ref{derivative of CL Lyacan inequality}), we have
    \begin{equation*}
        \dot{V}(\tilde{x},\tilde{\tau}_f)\leq - \frac{\min\{\lambda_{\min}(Q),\ 2\gamma_{\rm CL}{(\gamma\tau_f^2)}^{-1}\Theta\}}{\max\{\lambda_{\max}(P),\ {(\gamma\tau_f)}^{-1}\}} V(\tilde{x},\tilde{\tau}_f),
    \end{equation*}
    which is negative definite.
    By the Lyapunov stability theorem, it can be concluded that $(\tilde{x},\tilde{\tau}_f)=({\bm0},0)$ is globally exponentially stable.
    This ends the proof.
\end{proof}

A question arises about how many data points are needed in CL-MRAC to guarantee RE.
The condition $\rank([\phi(t_1), . . . , \phi(t_k)]) = 1$ simply requires to collect at least one non-zero data point, which relaxes considerably the PE condition. 
Based on (\ref{identity}), such data point exists if the derivative of the acceleration is non-zero at some time instant.
However, the adaptive law (\ref{standard CL adaptive law}) requires measurements of  $\dot{a}_f (t_j)$ and thus does not solve the control problem. 
In practice, it has been recognized that acceleration derivatives are hard to measure\cite{c2}.
In the following, an adaptive design solving the control problem is proposed.

\begin{figure}[t]
	\centering
	\subfloat[][Standard MRAC]{
		\centering                            
		\includegraphics[width=3.45in]{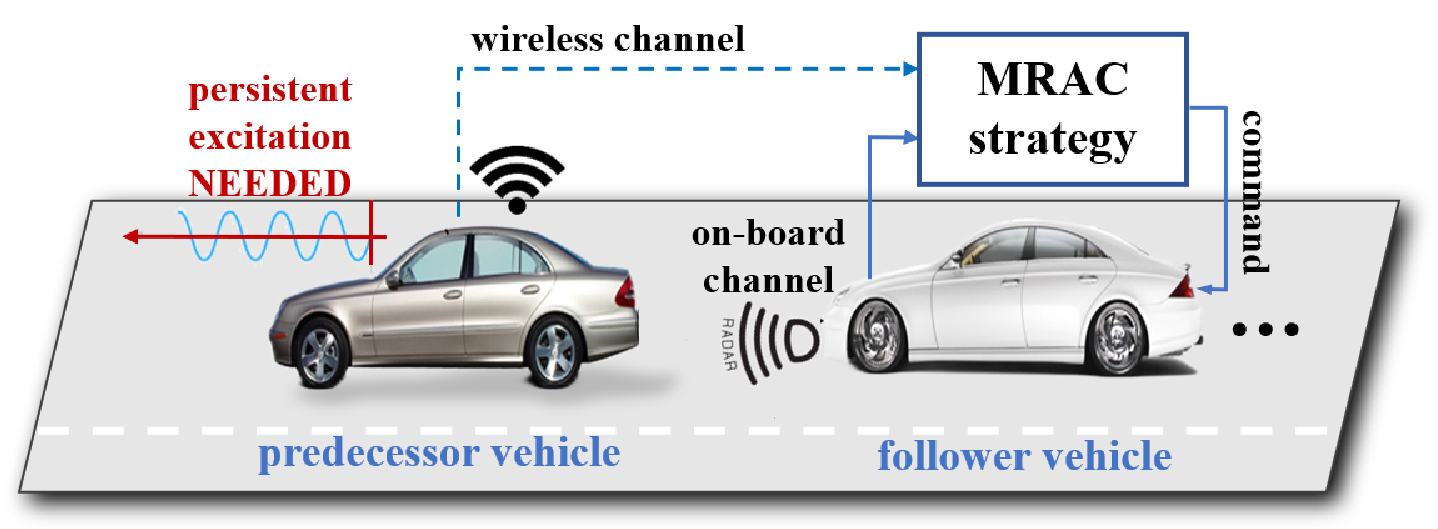}}
	
	\subfloat[][Concurrent Learning MRAC (CL-MRAC)]{
		\centering                            
		\includegraphics[width=3.45in]{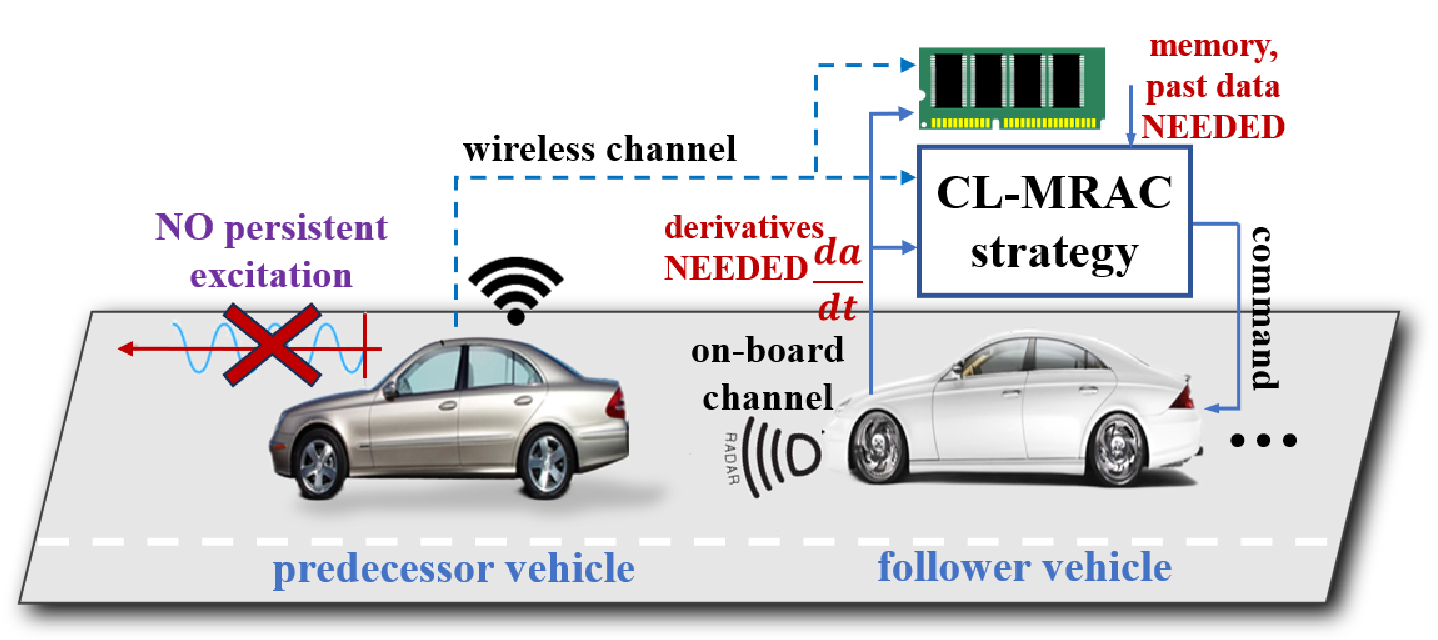}}
	
	\subfloat[][Composite MRAC (C-MRAC), proposed]{
		\centering                            
		\includegraphics[width=3.45in]{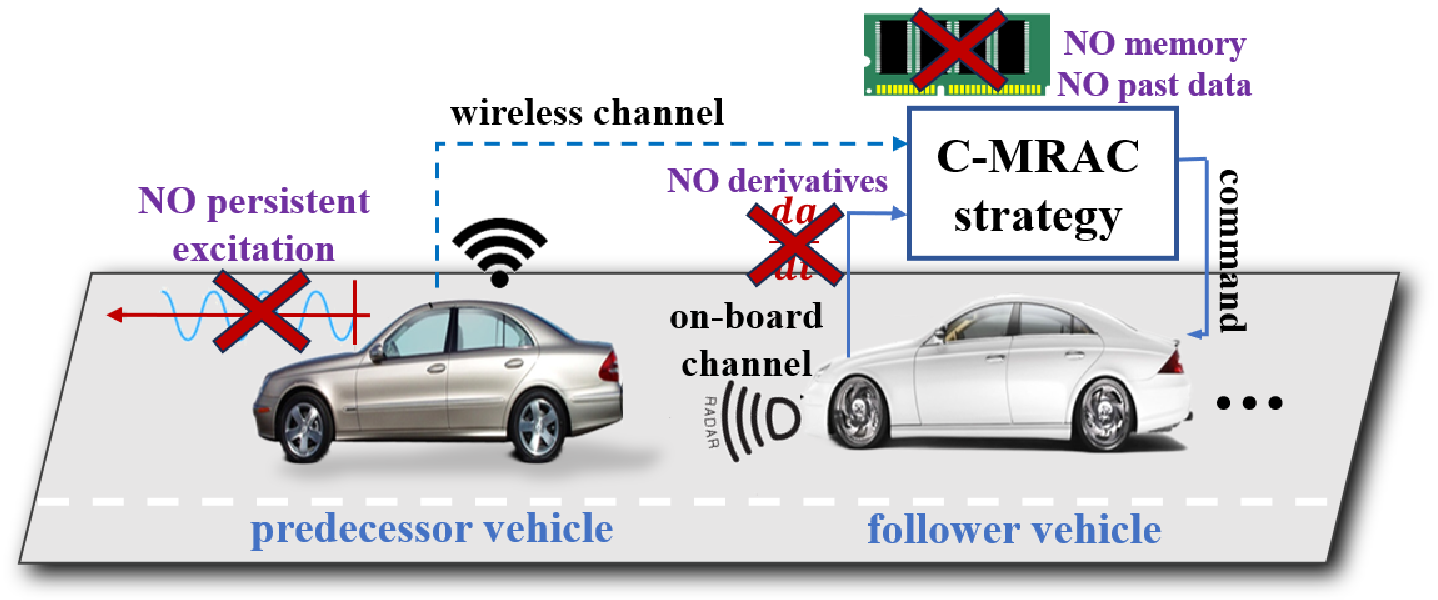}}    
	\caption{Proposed design as compared to the state of the art: (a) in standard MRAC, correct estimation of the uncertain powertrain time constant requires the precedessor to provide persistence of excitation; (b) concurrent learning MRAC removes the requirement for persistent excitation, at the price of introducing a memory with past data and additional measurements of acceleration derivative; (c) we propose a composite MRAC where no past data and no acceleration derivative are required.\label{contributions}}
\end{figure}

\subsection{Composite Design}
The first step to avoid measurements of state derivatives is to adopt a stable filter.
To this purpose, let us define $\phi_{\tau}(t)=\hat{\tau}_f(t) \phi(t)$ and multiply each side of (\ref{identity}) with a stable first-order filter $F(s)=\frac{1}{\kappa s+1}$ with $\kappa>0$.
We obtain
\begin{equation}
    \frac{\tau_f}{\kappa}(1-F(s))a_f(t)=F(s)\phi_{\tau}(t),\label{filtered identity s-domain}
\end{equation}
where we have used a hybrid time-frequency notation to indicate that a time signal is appropriately filtered. 
In the time domain, (\ref{filtered identity s-domain}) is equivalent to
\begin{equation}\label{xi generate}
    \xi(t)=\tau_f [\frac{1}{\kappa}a_f(t)-\eta(t)]\triangleq\tau_f\chi(t),
\end{equation}
where $\xi(t)$ and $\eta(t)$ are the results of the following filters
\begin{equation}\label{filtered system}
        \begin{aligned}
            \dot{\xi}(t)&=\frac{1}{\kappa}[\phi_{\tau}(t)-\xi(t)],\\
            \dot{\eta}(t)&=\frac{1}{\kappa^2}a_f(t)-\frac{1}{\kappa}\eta(t),
        \end{aligned}
\end{equation}
with initial conditions $\xi(t_0)=\eta(t_0)=0$.

Consistently with~\cite{c17}, let us design the information matrix $\Omega(t)$ and the auxiliary matrix $M(t)$ as
\begin{equation}\label{information matrix and auxiliary matrix}
        \begin{aligned}
            \dot{\Omega}(t) &= -k(t)\Omega(t)+\chi(t)\chi^\top(t),\\
            \dot{M}(t) &=-k(t)M(t)+\chi(t)\xi^\top(t),
        \end{aligned}
\end{equation}
with initial conditions $\Omega(t_0)=M(t_0)=0$, where $k(t)$ is a forgetting factor bounded by positive constants, i.e., $0<k_L\le k(t)\le k_U$.
A convenient design for $k(t)$ is
\begin{equation*}
        k(t)=k_L+(k_U-k_L)\tanh(\theta |\dot{\xi}(t)|),
\end{equation*}
where $\theta>0$ and $k_U>k_L>0$.
Note that, differently from $\dot{a}_f(t)$, the signal $\dot{\xi}(t)$ is not a state derivative to be measured, but a signal readily available as part of the filtering in (\ref{filtered system}).

Now, we are in a position to propose the C-MRAC platooning design.
\begin{theorem}
    Consider the predecessor-follower model (\ref{standard dyanamics}) with spacing error (\ref{spacing error}).
    The controller (\ref{actual controller}) with adaptive law
    \begin{equation}\label{Composite MRAC adaptive law}
        \dot{\hat{\tau}}_f(t)=-\gamma \tilde{B}^\top P\tilde{x}(t)\phi(t)-\gamma_{\rm C}(\Omega_a(t)\hat{\tau}_f(t)-M_a(t)),    
    \end{equation}
    where $\gamma_{\rm C}>0$ is a learning gain, and with
    \begin{equation*}
        \left\{
            \begin{aligned}
                t_a\triangleq \max({\arg\max}_{\substack{\ell\in[t_0,\ t]}}\Omega(\ell)),\\
                \Omega_a(t)\triangleq\Omega(t_a),\ M_a(t)\triangleq M(t_a),
            \end{aligned}
        \right.
    \end{equation*} 
    guarantees, for any unknown $\tau_p,\ \tau_f$ and any bounded $u_p(\cdot)$, that (\ref{x convergence}) holds.
    In addition, if there exist time instants $t_s>0$ and $t_e>t_s$, and $\alpha > 0$ such that $\int_{t_s}^{t_e} \chi^2(\ell)d\ell>\alpha>0$, i.e., FE holds over $[t_s,t_e]$, then (\ref{tau convergence}) can be achieved.
\end{theorem}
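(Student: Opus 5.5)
Key identity first. From (\ref{xi generate}) we have $\xi(t)=\tau_f\chi(t)$ at all times. Substituting into (\ref{information matrix and auxiliary matrix}) gives $\dot M=-kM+\tau_f\chi^2$. Since $\Omega(t_0)=M(t_0)=0$, uniqueness of solutions of this linear ODE yields $M(t)=\tau_f\Omega(t)$ for all $t\ge t_0$, and hence $M_a(t)=\tau_f\Omega_a(t)$. The adaptive law (\ref{Composite MRAC adaptive law}) then becomes
\begin{equation*}
\dot{\hat{\tau}}_f(t)=-\gamma\tilde{B}^\top P\tilde{x}(t)\phi(t)-\gamma_{\rm C}\Omega_a(t)\tilde{\tau}_f(t).
\end{equation*}
This has exactly the structure of (\ref{standard CL law for proof}), with $\gamma_{\rm CL}\tau_f^{-1}\Theta$ replaced by $\gamma_{\rm C}\Omega_a(t)$. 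Two things must be checked here.
\begin{itemize}
\item The derivation of (\ref{filtered identity s-domain}) from (\ref{identity}) must hold with zero initial filter states. Writing $\tau_f\dot a_f=\phi_\tau$ and filtering gives $\tau_f$ times the filtered $\dot a_f$, which equals $\tau_f\kappa^{-1}(a_f-F a_f)$. This matches $\tau_f(\kappa^{-1}a_f-\eta)$ only up to a decaying term $\tau_f\kappa^{-1}a_f(t_0)e^{-(t-t_0)/\kappa}$.
\item Because of that term, I would either assume $a_f(t_0)=0$, or carry an exponentially decaying mismatch $\epsilon(t)$ in $M-\tau_f\Omega$ and handle it as a vanishing perturbation. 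I would first try to absorb it by noting that $\Omega_a$ is frozen at times where $\Omega$ is maximal.
\end{itemize}

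Tracking (\ref{x convergence}). I use the Lyapunov function (\ref{Lyacan}). With (\ref{actual tracking error}), the cross terms cancel as in Theorem~\ref{CL theorem}, leaving
\begin{equation*}
\dot V=-\tfrac12\tilde x^\top Q\tilde x-\frac{\gamma_{\rm C}}{\gamma\tau_f}\Omega_a(t)\tilde\tau_f^2\le0,
\end{equation*}
since $\Omega(t)\ge0$ for all $t$, being the solution of a stable scalar filter driven by $\chi^2\ge0$. This gives the following chain.
\begin{itemize}
\item $\tilde x$ and $\tilde\tau_f$ are bounded.
\item $\bar x$ is bounded because $\bar A$ is Hurwitz and $a_p$ is bounded (bounded $u_p$ through a stable first-order lag). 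Hence $x$, $\phi$, $\dot{\tilde x}$ are bounded, and so are $\chi$, $\xi$, $\eta$, $\Omega$, $M$.
\item $\tilde x\in\mathcal L_2$, so Barbalat's lemma gives $\tilde x\to0$.
\end{itemize}

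Parameter convergence (\ref{tau convergence}). Suppose FE holds on $[t_s,t_e]$. Integrating the $\Omega$ dynamics with $k\le k_U$ gives
\begin{equation*}
\Omega(t_e)\ge e^{-k_U(t_e-t_s)}\int_{t_s}^{t_e}\chi^2\,d\ell\ge e^{-k_U(t_e-t_s)}\alpha\triangleq\sigma>0.
\end{equation*}
By construction $t_a$ is the latest maximizer of $\Omega$ on $[t_0,t]$, so $\Omega_a(t)$ is nondecreasing in $t$. Therefore $\Omega_a(t)\ge\sigma$ for all $t\ge t_e$. For $t\ge t_e$, $\dot V$ then satisfies the same bound as (\ref{derivative of CL Lyacan inequality}) with $\Theta$ replaced by $\sigma\tau_f/\gamma_{\rm C}$-type constants. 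Combined with (\ref{Lyacan inequality}), this gives exponential decay of $V$ after $t_e$, so $\tilde\tau_f\to0$.

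Main obstacle. The delicate step is the exact identity $M=\tau_f\Omega$, which hinges on the filter initialization and on (\ref{identity}). If the initial-condition mismatch cannot be removed, I would bound it as an exponentially vanishing input to an exponentially stable system after $t_e$. The consequence is that convergence (rather than exponential convergence to zero from a pure Lyapunov decrease) follows by a comparison-lemma argument.
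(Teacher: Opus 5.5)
Your main line is the paper's proof. It uses the same Lyapunov function (\ref{Lyacan}), the identity $M=\tau_f\Omega$ (hence $M_a=\tau_f\Omega_a$) that gives the law the same form as in Theorem~\ref{CL theorem}, the same $\dot V$, Barbalat for (\ref{x convergence}), and a positive lower bound on $\Omega_a$ after $t_e$ for exponential convergence. Your explicit bound $\Omega_a(t)\ge e^{-k_U(t_e-t_s)}\alpha$, obtained from the monotonicity of the running maximum, is cleaner than the paper's appeal to a citation. Your caveat is also correct, and the paper passes over it. With $\xi(t_0)=\eta(t_0)=0$, identity (\ref{xi generate}), on which $M=\Omega\tau_f$ rests, holds only if $a_f(t_0)=0$ (as in the paper's simulations).

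Your fallback for $a_f(t_0)\neq0$ would not work, and the reason is the freezing you hoped would absorb the mismatch. Let $\delta=\xi-\tau_f\chi$, so $\dot\delta=-\kappa^{-1}\delta$, and $\epsilon=M-\tau_f\Omega$, so $\dot\epsilon=-k\epsilon+\chi\delta$. Then indeed $\epsilon(t)\to0$. But the adaptive law sees $M_a(t)-\tau_f\Omega_a(t)=\epsilon(t_a(t))$, not $\epsilon(t)$.

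In the FE-only situation where $\chi\to0$, $\Omega$ decays after its peak. So $t_a(t)$ freezes at some finite $t^*$, and the estimation error obeys $\dot{\tilde\tau}_f=-\gamma\tilde B^\top P\tilde x\phi-\gamma_{\rm C}\Omega(t^*)\tilde\tau_f+\gamma_{\rm C}\epsilon(t^*)$. This constant forcing is generically nonzero; for instance $\chi(t_0)\delta(t_0)=-\tau_f a_f^2(t_0)/\kappa^2<0$. Even with $\tilde x\to0$ you get $\hat\tau_f\to\tau_f+\epsilon(t^*)/\Omega(t^*)$. Sampling at $t_a$ locks the mismatch in, so no vanishing-perturbation or comparison-lemma argument can deliver (\ref{tau convergence}).

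The correct repair is at the filter level. The signal $\chi=\kappa^{-1}a_f-\eta$ satisfies $\tau_f\dot\chi=\kappa^{-1}(\phi_\tau-\tau_f\chi)$, the same ODE as $\xi$ in (\ref{filtered system}). So initialize $\eta(t_0)=a_f(t_0)/\kappa$, which is available since $a_f$ is measured. This gives $\chi(t_0)=0=\xi(t_0)$, hence $\xi\equiv\tau_f\chi$ exactly, and your argument then goes through unchanged for any $a_f(t_0)$.
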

\begin{proof}
    Consider the Lyapunov candidate function (\ref{Lyacan}).
    Utilizing (\ref{actual tracking error}) and (\ref{Composite MRAC adaptive law}), the time derivative of (\ref{Lyacan}) along the trajectory can be obtained as
    \begin{equation}\label{derivative of C-MRAC Lyacan}
        \dot{V}(\tilde{x},\tilde{\tau}_f) = -\frac{1}{2}\tilde{x}^\top Q \tilde{x} - \Gamma_{\rm C}\frac{1}{\gamma\tau_f}\Omega_a{\tilde{\tau}_f}^2,
    \end{equation}
    where we have used the fact that, by solving (\ref{information matrix and auxiliary matrix}), one can get
    \begin{equation*}\label{solution of information matrix and auxiliary matrix}
        \begin{aligned}
            \Omega(t) &= \int_{t_0}^{t}(e^{\int_{\ell}^{t}-k(\zeta)d\zeta}\chi(\ell)\chi^\top(\ell))d\ell,\\
            M(t) &=\int_{t_0}^{t}(e^{\int_{\ell}^{t}-k(\zeta)d\zeta}\chi(\ell)\xi^\top(\ell))d\ell=\Omega(t)\tau_f,
        \end{aligned}
    \end{equation*}
    leading to
    \begin{equation*}\label{simplify the adaptive law}
        M_a(t)=\Omega_a(t)\tau_f.
    \end{equation*}
    Similar to Theorem~\ref{CL theorem}, (\ref{derivative of C-MRAC Lyacan}) is negative semidefinite, even in the absence of FE, which allows to obtain (\ref{x convergence}) using Barlalat's lemma as in standard MRAC. 
    If, in addition, $\chi(t)$ satisfies FE over $[t_s,t_e]$, then $\Omega_a>0$\cite{c17} and the following inequality holds
    \begin{equation}
        \begin{aligned}\label{derivative of C-MRAC Lyacan inequality}
            \dot{V}(\tilde{x},\tilde{\tau}_f)&\le -\frac{1}{2}\lambda_{\min}(Q){\Vert \tilde{x} \Vert}^2-\frac{1}{\gamma\tau_f}\Gamma_{\rm C}\Omega_a(t_e)\tilde{\tau}_f^2\\
            &\le -\frac{1}{2} \min\{\lambda_{\min}(Q),\ \frac{1}{\gamma\tau_f}2\gamma_{\rm C}\Omega_a(t_e)\}{\Vert\psi\Vert}^2.
        \end{aligned}
    \end{equation}

    From (\ref{Lyacan inequality}) and (\ref{derivative of C-MRAC Lyacan inequality}), we have
    \begin{equation*}
            \dot{V}(\tilde{x},\tilde{\tau}_f)\leq- \frac{\min\{\lambda_{\min}(Q),\ 2\Gamma_{\rm C}\Omega_a(t_e){(\gamma\tau_f)}^{-1}\}}{\max\{\lambda_{\max}(P),\ {(\gamma\tau_f)}^{-1}\}} V(\tilde{x},\tilde{\tau}_f),
    \end{equation*}
    which is negative definite.
    By the Lyapunov stability theorem, it can be concluded that $(\tilde{x},\tilde{\tau}_f)=({\bm0},0)$ is globally exponentially stable.
    This ends the proof. 
\end{proof}

\begin{figure*}[thbp]
    \centering
    \subfloat[][Non-adaptive control when $u_0$ guarantees PE.\label{ev non-adaptive PE}]{
        \centering                            
        \includegraphics[width=0.435\linewidth]{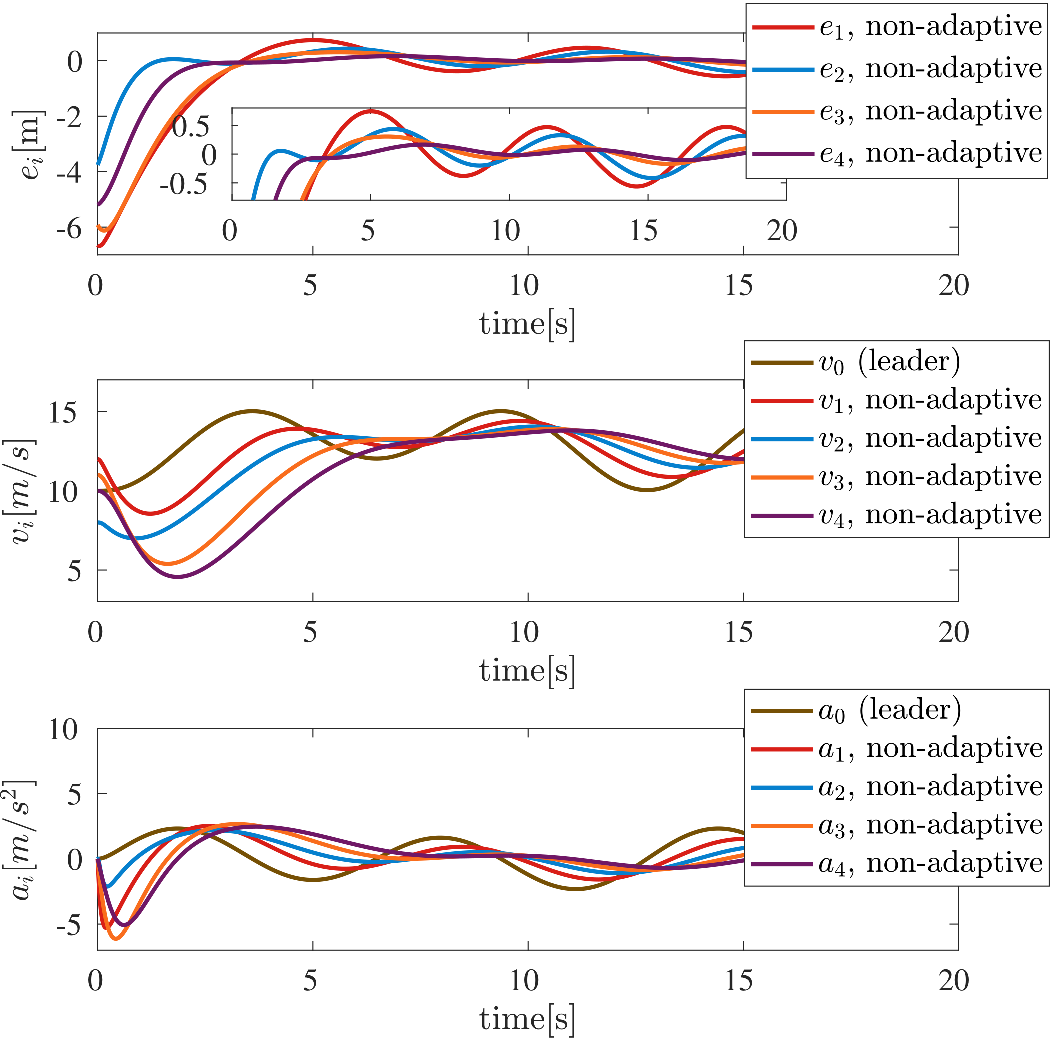}}   
    \hspace{4mm}
        \subfloat[][Non-adaptive control when $u_0$ does \emph{not} guarantee PE.\label{ev non-adaptive FE}]{
        \centering                            
        \includegraphics[width=0.435\linewidth]{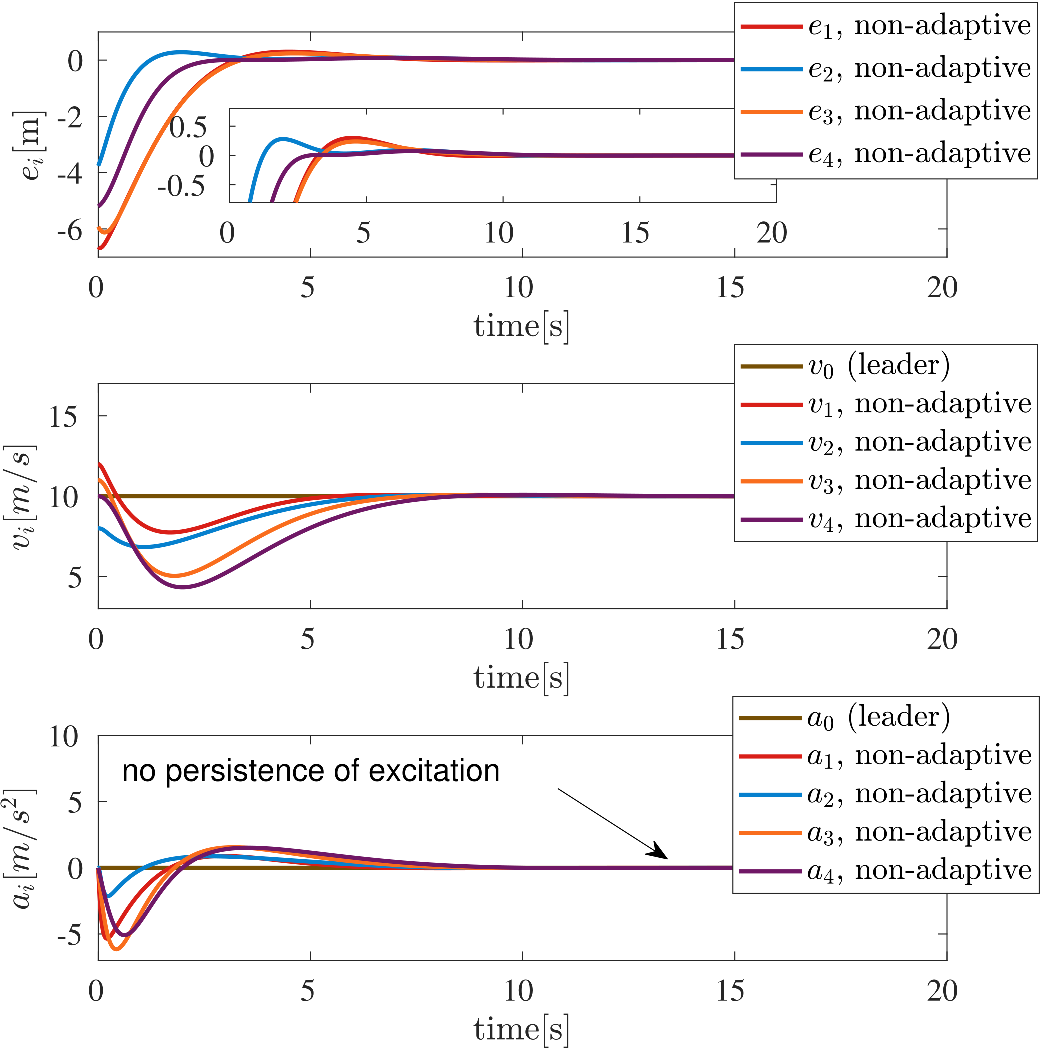}}
    \caption{Spacing errors, velocities and accelerations with non-adaptive controller (\ref{reference model input}) without knowledge of the actual $\tau_i$. Note that the spacing errors in (a) do not converge to zero as an effect of the non-zero acceleration of the leading vehicle.\label{ev_non-adaptive}}
\end{figure*}

\begin{figure*}
    \centering
    \subfloat[][MRAC and C-MRAC when $u_0$ guarantees PE.]{
        \centering                            
        \includegraphics[width=0.435\linewidth]{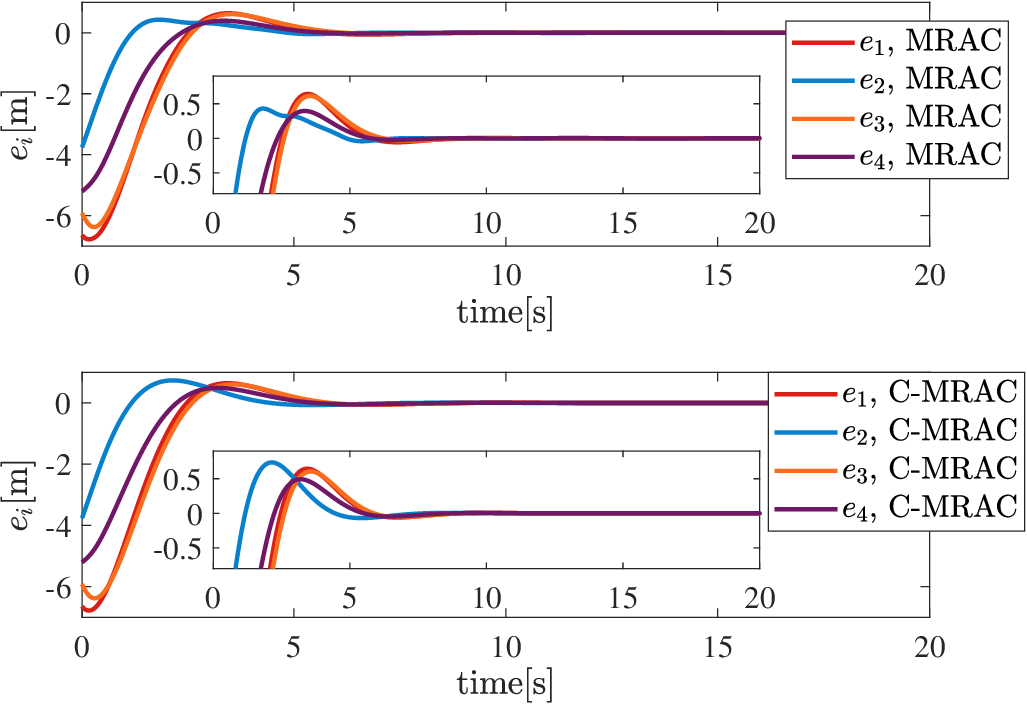}}   
    \hspace{4mm}
    \subfloat[][MRAC and C-MRAC  when $u_0$ does \emph{not} guarantee PE.]{
        \centering                            
        \includegraphics[width=0.425\linewidth]{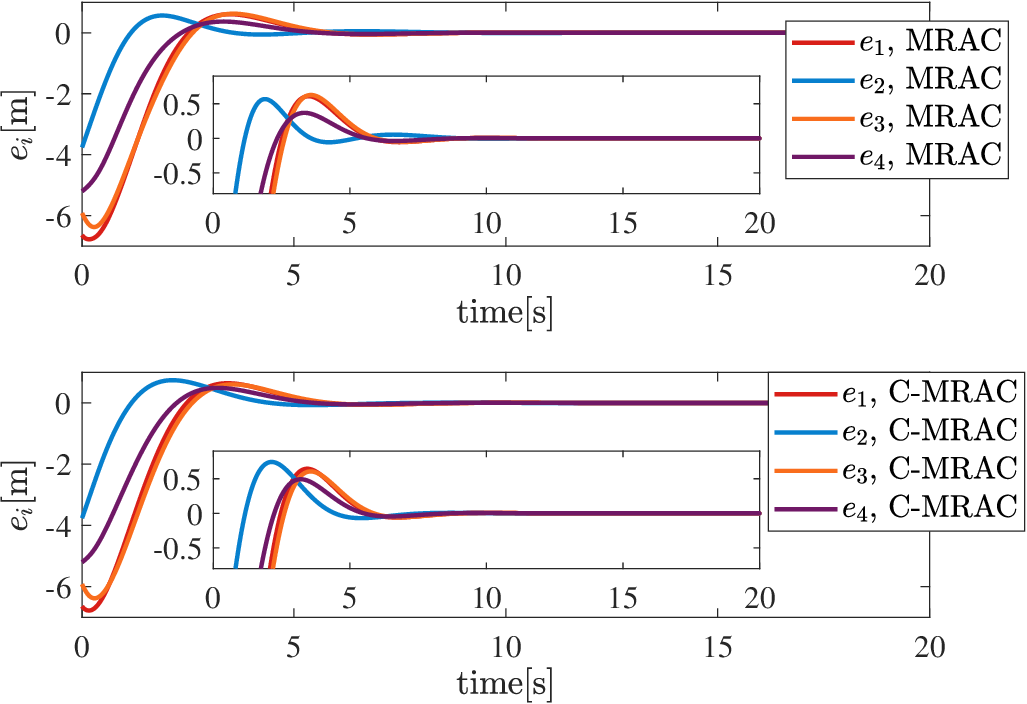}}
    \captionsetup{justification=centering}
    \caption{Spacing errors with adaptive controllers (MRAC (\ref{standard MRAC adaptive law}) and C-MRAC (\ref{Composite MRAC adaptive law})) without knowledge of the actual $\tau_i$. No matter if the leading vehicle has non-zero or zero acceleration, the adaptation allows the spacing errors to converge to zero.\label{e_adaptive}}      
\end{figure*}

\begin{figure*}
    \centering
    \subfloat[][MRAC (\ref{standard MRAC adaptive law}) when $u_0$ guarantees PE.]{
        \centering                            
        \includegraphics[height=5.1cm,width=0.4\linewidth]{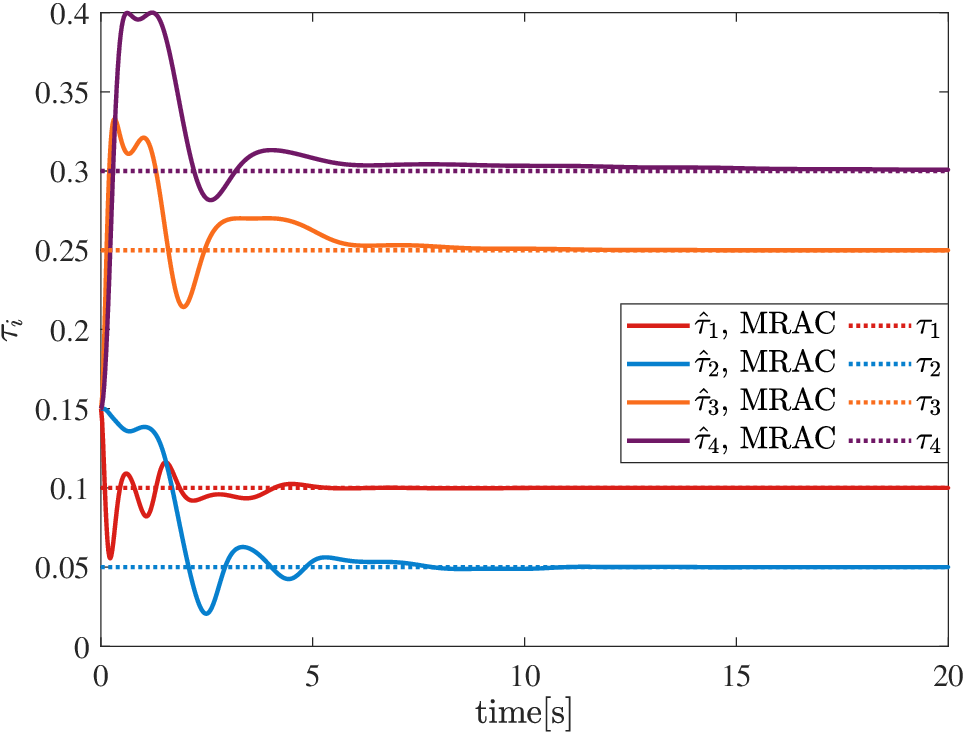}}   
    \hspace{12mm}
    \subfloat[][MRAC (\ref{standard MRAC adaptive law}) when $u_0$ does \emph{not} guarantee PE.]{
        \centering                            
        \includegraphics[height=5.1cm,width=0.4\linewidth]{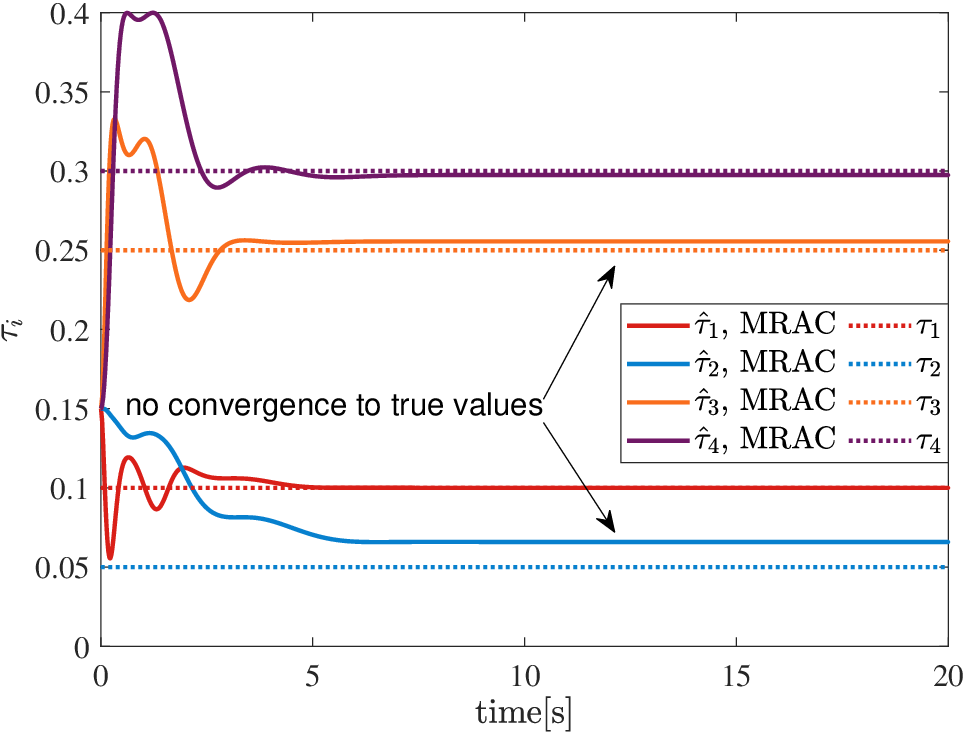}}
         
    \subfloat[][CL-MRAC (\ref{standard CL adaptive law}) when $u_0$ guarantees PE.]{
        \centering                            
        \includegraphics[height=5.1cm,width=0.4\linewidth]{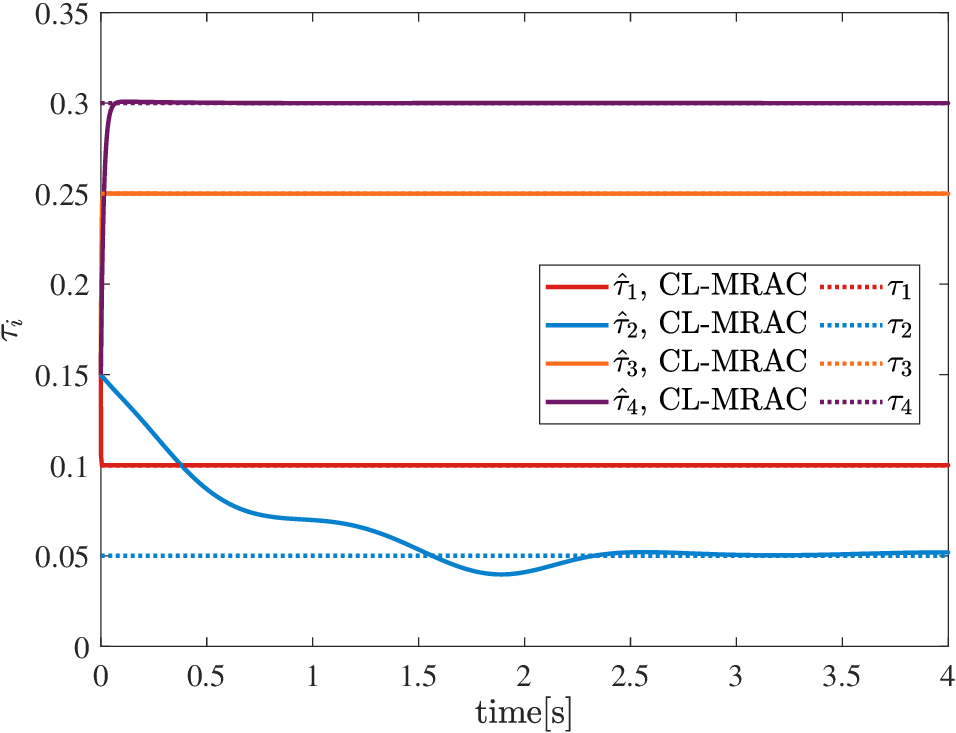}}   
    \hspace{12mm}
    \subfloat[][CL-MRAC (\ref{standard CL adaptive law}) when $u_0$ does \emph{not} guarantee PE.]{
        \centering                            
        \includegraphics[height=5.1cm,width=0.4\linewidth]{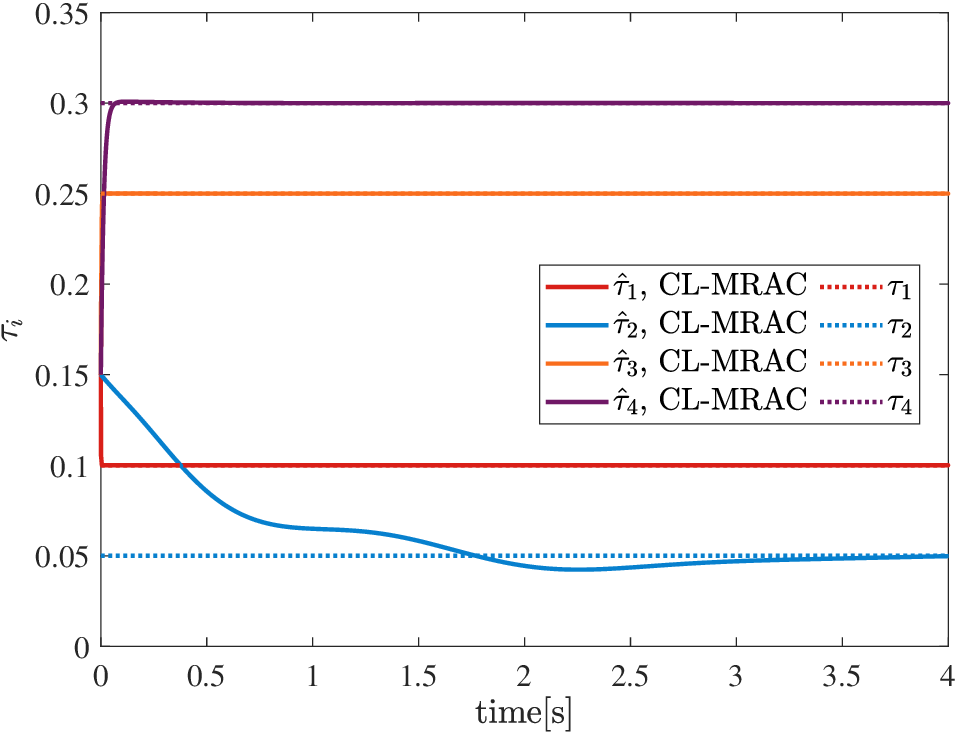}}
            
    \subfloat[][ICL-MRAC (\ref{ICL adaptive law}) when $u_0$ guarantees PE.]{
        \centering                            
        \includegraphics[height=5.1cm,width=0.4\linewidth]{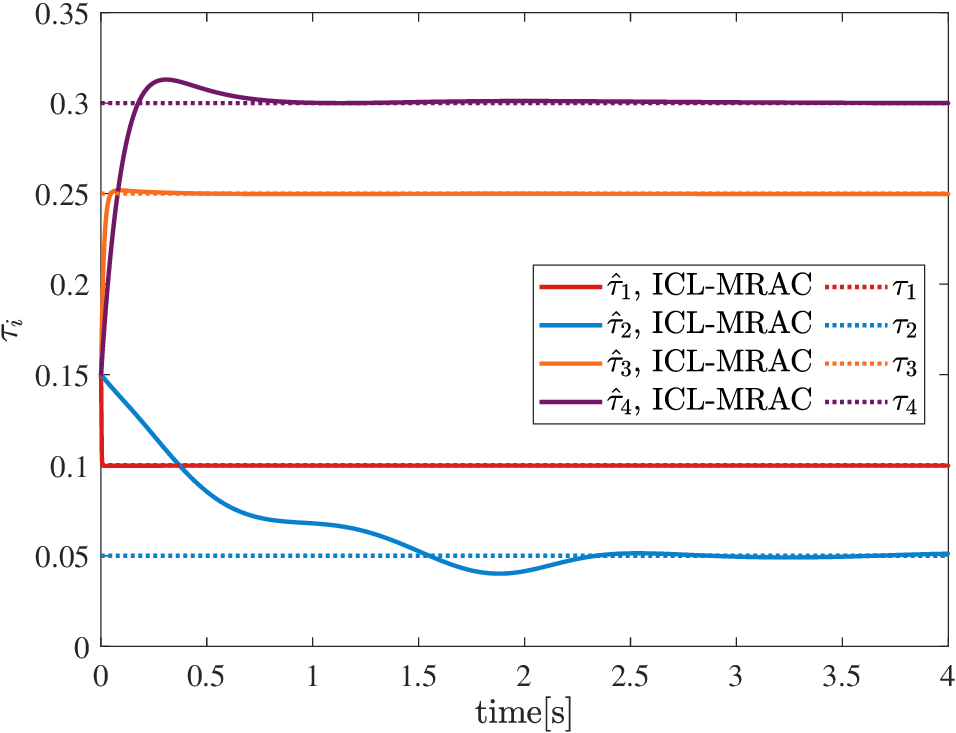}}   
    \hspace{12mm}
    \subfloat[][ICL-MRAC (\ref{ICL adaptive law}) when $u_0$ does \emph{not} guarantee PE.]{
        \centering                            
        \includegraphics[height=5.1cm,width=0.4\linewidth]{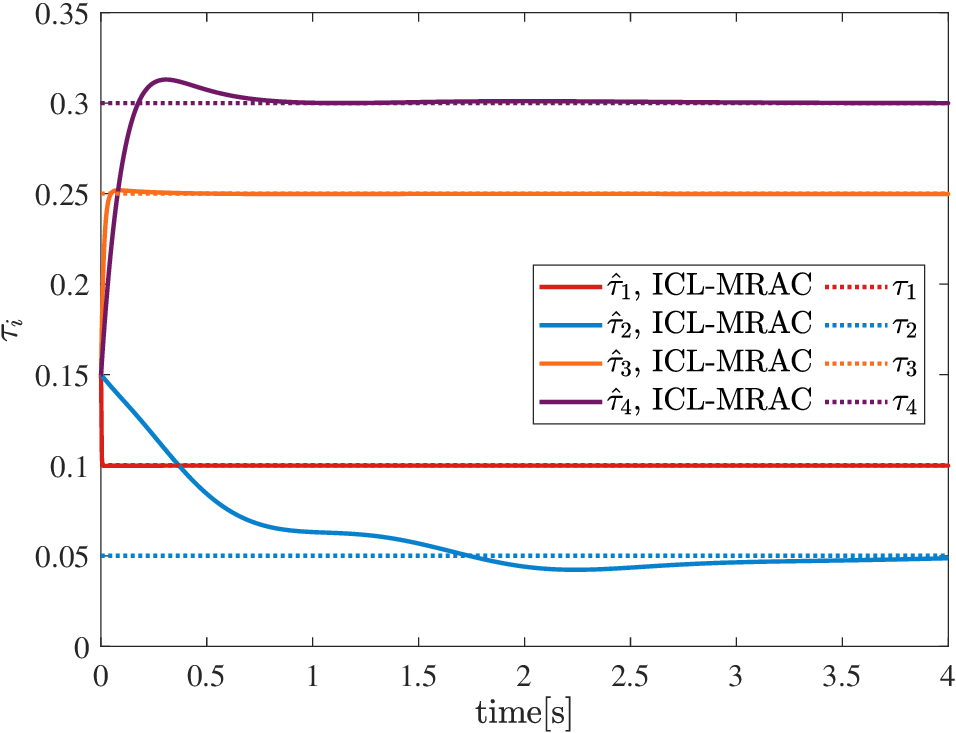}}

    \subfloat[][C-MRAC (\ref{Composite MRAC adaptive law}) when $u_0$ guarantees PE.]{
        \centering                            
        \includegraphics[height=5.1cm,width=0.4\linewidth]{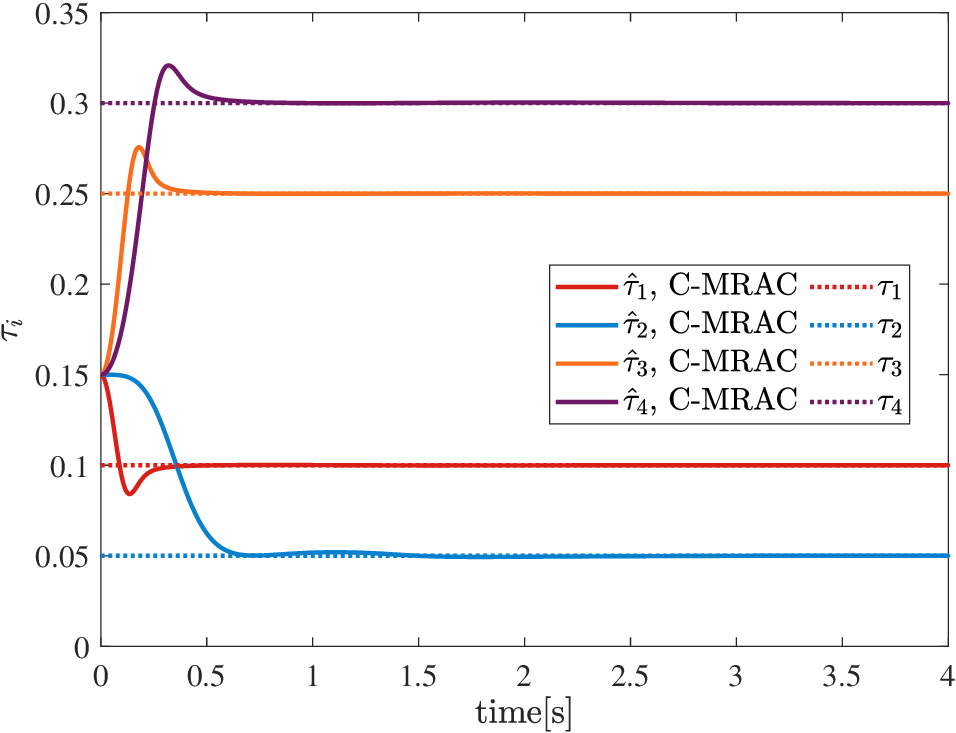}}   
    \hspace{12mm}
    \subfloat[][C-MRAC (\ref{Composite MRAC adaptive law}) when $u_0$ does \emph{not} guarantee PE.]{
        \centering                            
        \includegraphics[height=5.1cm,width=0.4\linewidth]{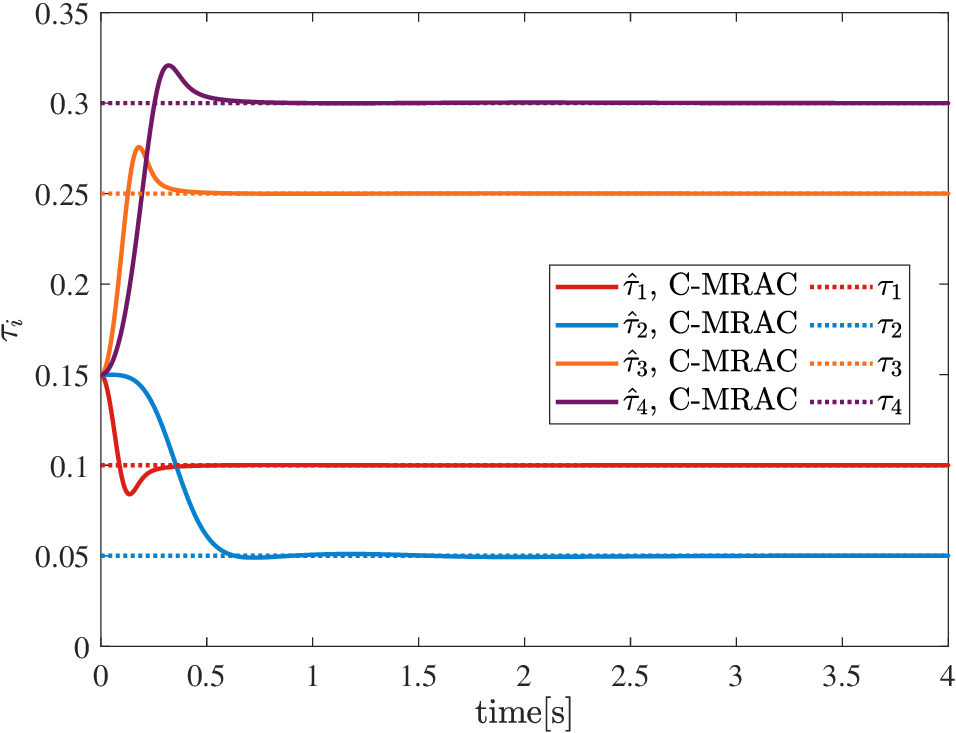}}
            
        \captionsetup{justification=centering}
        \caption{Estimates with different adaptive controllers without knowledge of the actual $\tau_i$.\label{tau_adaptive}}      
    \end{figure*}
    
    \begin{figure*}
    \centering
    \subfloat[][Spacing errors with MRAC (\ref{standard MRAC adaptive law}) and C-MRAC (\ref{Composite MRAC adaptive law}).\label{hybrid spacing error}]{
        \centering                            
        \includegraphics[width=0.435\linewidth]{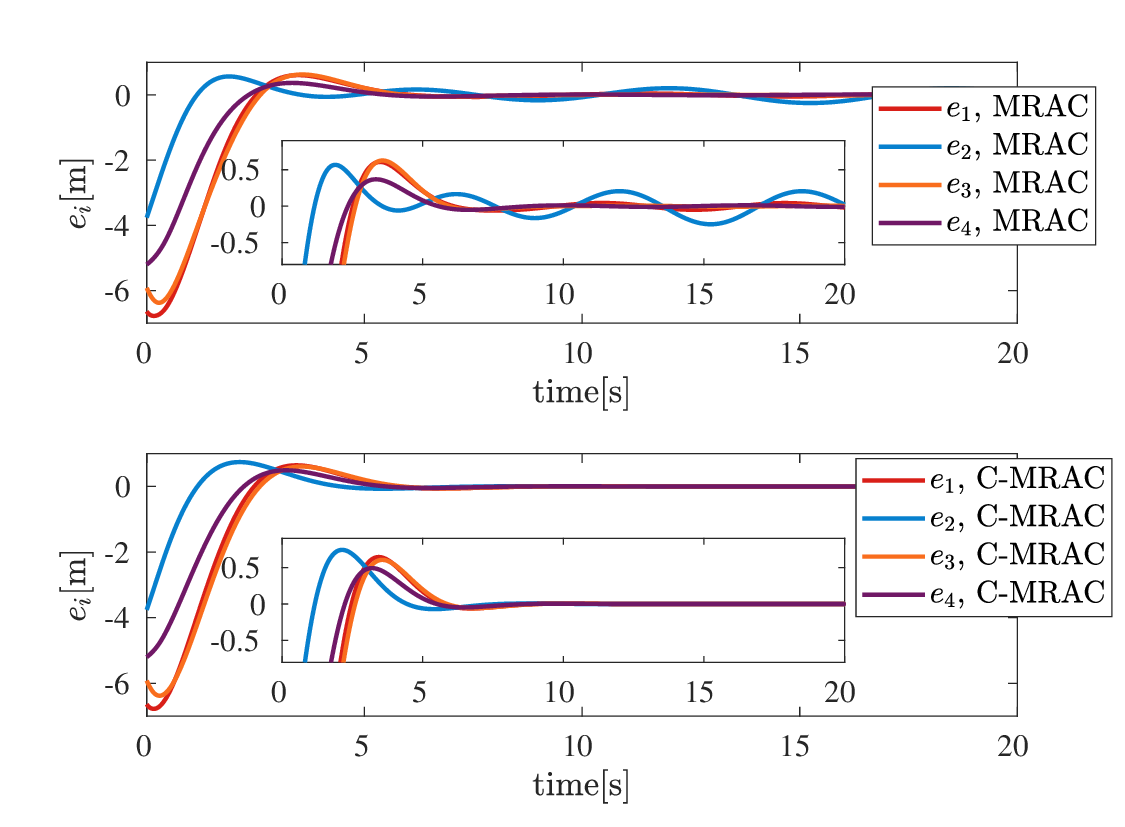}}   
    \hspace{4mm}
    \subfloat[][Estimates with MRAC (\ref{standard MRAC adaptive law}) and C-MRAC (\ref{Composite MRAC adaptive law}).\label{hybrid estimates}]{
        \centering                            
        \includegraphics[width=0.425\linewidth]{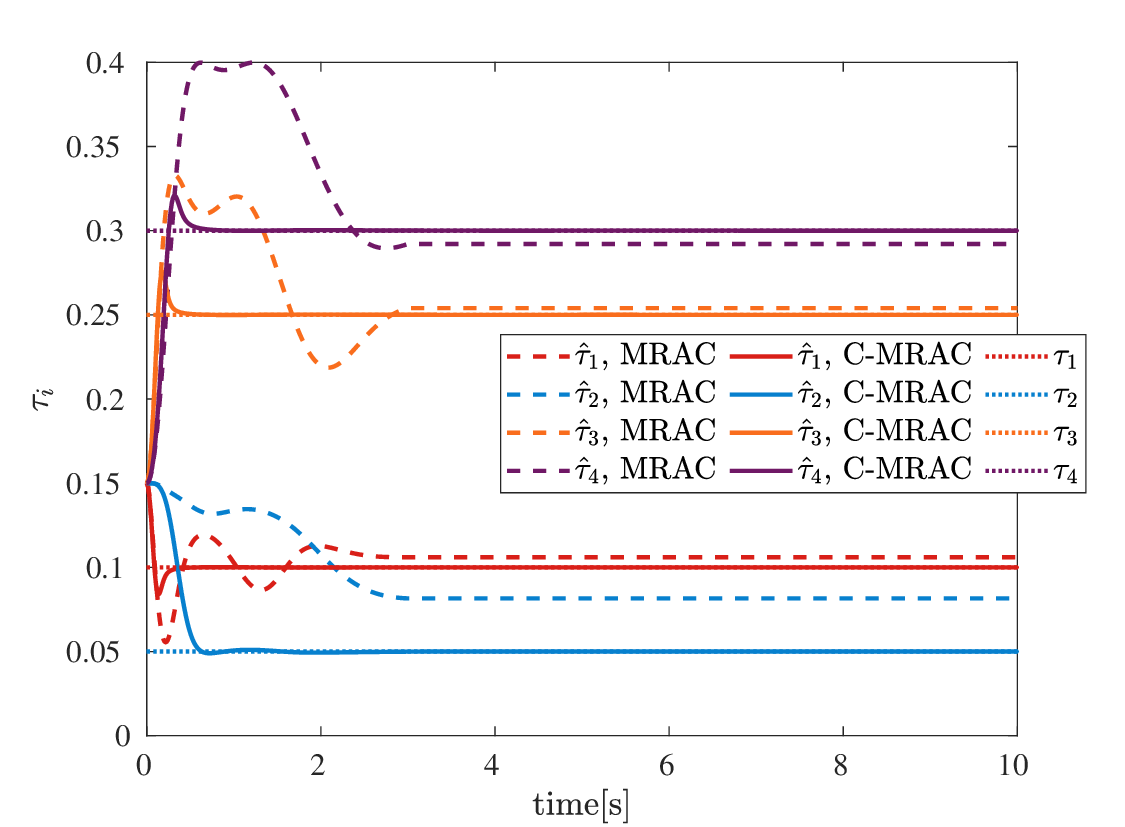}}
    \captionsetup{justification=centering}
    \caption{Spacing errors and estimates with MRAC (\ref{standard MRAC adaptive law}) and C-MRAC (\ref{Composite MRAC adaptive law}) when $u_0$ has piecewise acceleration profile and estimate  updates are stopped at $t=3$s.
    (a) Spacing errors with MRAC do not converge to zero; (b) MRAC fails to correctly estimate the powertrain time constants in 3 seconds. 
    On the other hand, C-MRAC correctly estimates in around 1 second, despite the input $u_0$ does \emph{not} guarantee PE.\label{hybrid}}      
\end{figure*}

    \begin{figure*}
        \centering
        \subfloat[][Non-adaptive control and C-MRAC.\label{carsim e PE}]{
            \centering                            
            \includegraphics[width=0.48\linewidth]{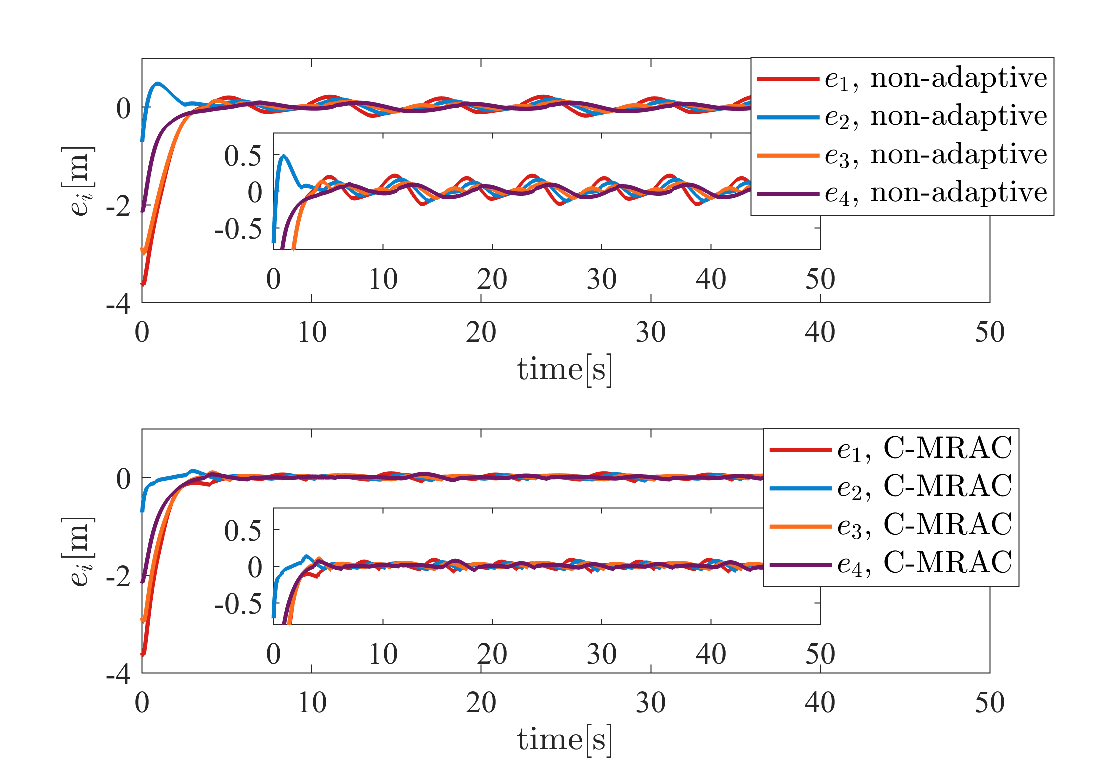}}   
        \hspace{2mm}
        \subfloat[][CarSim interface for vehicle 2.\label{carsim interface}]{
            \centering                            
            \includegraphics[width=0.48\linewidth]{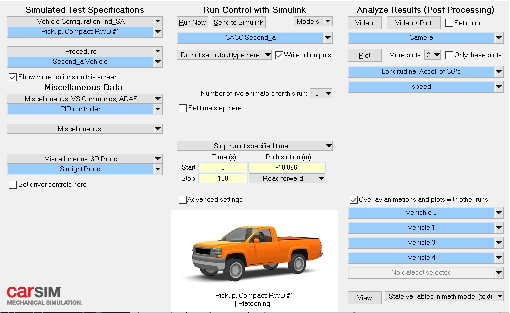}}       
        \captionsetup{justification=centering}
        \caption{CarSim test: spacing errors with non-adaptive controller (\ref{reference model input}) and adaptive controller C-MRAC (\ref{Composite MRAC adaptive law}). 
        Even without knowledge of the vehicle dynamics simulated in CarSim, the proposed adaptive method allows the spacing errors to converge close to zero (convergence is not asymptotic because the CarSim vehicle dynamics are more complex than the linear dynamics (\ref{standard dyanamics}) used for control design).\label{carsim e}}      
    \end{figure*}
    
    \begin{figure*}
        \centering
        \begin{minipage}[b]{.46\linewidth}
            \centering
            \subfloat[Initial condition ($t=0$s)\label{carsim initial condition}]{\includegraphics[height=1.8cm,width=1\linewidth]{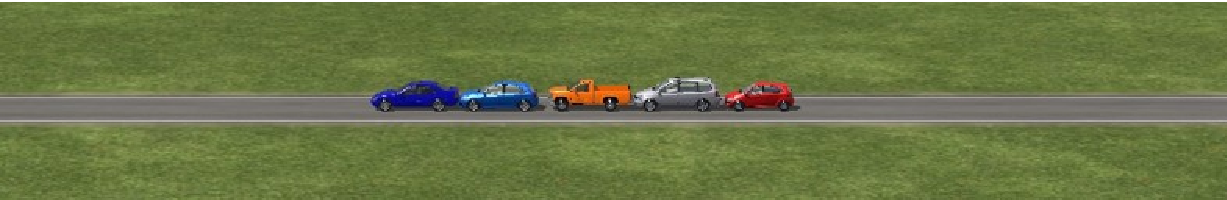}}
            \vspace{5mm}
            \subfloat[High velocity ($t=19$s)\label{carsim high speed}]{\includegraphics[height=1.8cm,width=1\linewidth]{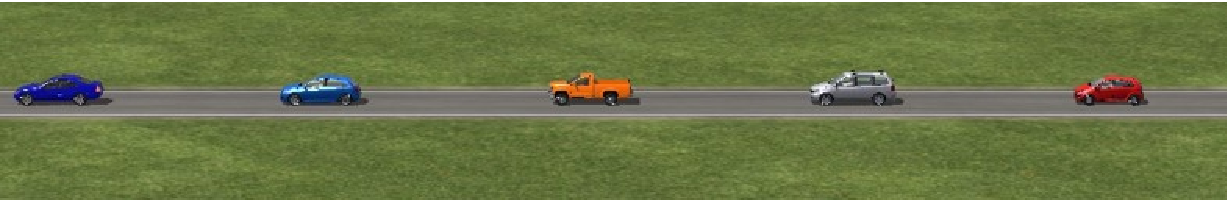}}
            \vspace{5mm}
            \subfloat[Low velocity ($t=34$s)\label{carsim low speed}]{\includegraphics[height=1.8cm,width=1\linewidth]{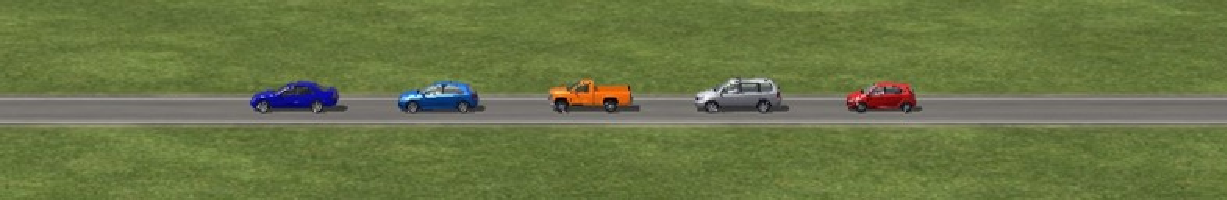}}
        \end{minipage} 
        \medskip
        \begin{minipage}[b]{.46\linewidth}
            \centering
            \subfloat[The longitudinal acceleration and velocity.\label{carsim a_v}]{\includegraphics[width=.95\linewidth]{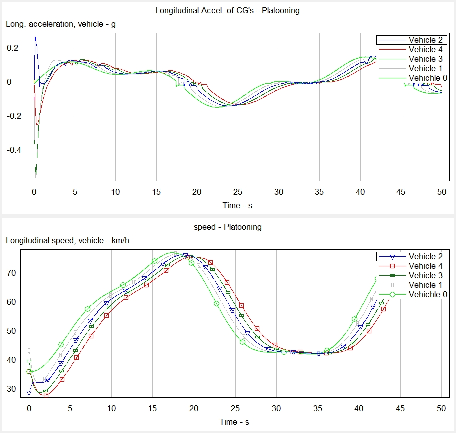}}
        \end{minipage}
        \captionsetup{justification=centering}
        \caption{CarSim visualization of the platoon at different time instants, of the accelerations and velocities of the vehicles. 
        In line with the time headway policy (\ref{spacing error}), the spacing grows as the velocity grows and decreases as the velocity decreases.\label{carsim platoon}}
    \end{figure*}
It can be noted that, based on (\ref{filtered identity s-domain})-(\ref{filtered system}), the FE condition on $\chi(t)$ is satisfied if the derivative of the acceleration is non-zero over a finite time interval, which is often the case in practice, even in situations where the platoon converges to constant velocity. 
This is consistent with the fact that the CL-MRAC in Theorem~\ref{CL theorem} requires only one time instant where the derivative of the acceleration is non-zero. 
However, differently from (\ref{standard CL adaptive law}), the adaptive law (\ref{Composite MRAC adaptive law}) does not require any acceleration derivative: it uses the same measurements used by the control law (\ref{actual controller}).
The framework of the proposed platooning design is shown in Fig.~\ref{framework}. 
\begin{remark}
The adaptive law (\ref{Composite MRAC adaptive law}) differs from the state-of-the-art composite adaptive control~\cite{c17}, which cannot handle the error dynamics (\ref{actual tracking error}) due to the presence of the uncertain $\tau_f$ in the input matrix. 
As a matter of fact,~\cite{c17} assumes the input matrix to be known. 
The proposed C-MRAC thus extends the state of the art.
\end{remark}

\subsection{Comparison with Existing Adaptive Schemes}
This section elaborates on the key differences of the proposed design with respect to state-of-the-art designs. 
Such designs will also be used in Section \Rmnum{4} for numerical comparisons.

\subsubsection{Adaptive Platooning with standard MRAC}
The adaptive law (\ref{standard MRAC adaptive law}) requires PE condition of $\phi(t)$ for the convergence of $\hat{\tau}_f(t)$ to $\tau_f$.
However, PE condition is rather restrictive in practice, e.g., a platoon converging to constant velocity would converge to zero acceleration and fail to meet PE.

\subsubsection{Adaptive Platooning with CL-MRAC}
With the adaptive law (\ref{standard CL adaptive law}), the PE condition can be relaxed to RE condition.
To achieve this property, CL-MRAC uses a set of previously collected data. 
The advantage is that only one non-zero data sample is enough to guarantee RE. 
However, in order to collect such data samples, measurements of acceleration derivative $\dot{a}_f(t_j)$ is required, which is the main drawback of CL-MRAC.

\subsubsection{Adaptive Platooning with ICL-MRAC}
If ICL in\cite{c15} is applied to adaptive platooning, a suitable design is as follows:
\begin{equation}\label{ICL adaptive law}
    \begin{aligned}
        \dot{\hat{\tau}}_f(t)&=-\gamma\tilde{B}^\top P\tilde{x}(t)\phi(t)\\
        &\quad-\gamma_{\rm ICL}\sum_{j=1}^{k} \mathcal{A}_j(\mathcal{A}_j\hat{\tau}_f(t)-\int_{t_j-\Delta t}^{t_j}\hat{\tau}_f(\ell)\phi(\ell)d\ell),
    \end{aligned}
\end{equation}
where $\gamma_{\rm ICL}>0$ is another learning rate, $\Delta t>0$ denotes the size of the window of integration and $\mathcal{A}_j=a_f(t_j)-a_f(t_j-\Delta t)$.
The proof is omitted but follows along similar steps as Theorem~\ref{CL theorem} and the considerations in~\cite{c15}.
The adaptive law (\ref{ICL adaptive law}) replaces the time derivative with an integral over a time window. 
By doing this, the PE condition can be relaxed to IRE condition, requiring only one sample where the difference of acceleration, i.e., $a_f(t)-a_f(t-\Delta t)$, is non-zero.
However, different from (\ref{Composite MRAC adaptive law}), the adaptive law (\ref{ICL adaptive law}) still relies on previously collected data samples.

We summarize some representative comparison aspects of different designs in Fig~\ref{contributions}.

\section{Numerical validation}
To validate the theoretical analysis, we make use of a platoon with five vehicles (one leader indexed as $0$ and four following vehicles indexed as $1,\ 2,\ 3,\ 4$), with initial conditions shown in Table~\ref{table1}.
In the table, $s_i(0),\ i\in\{0,1,\dots,4\}$ refers to the initial position of each vehicle with respect to the leading vehicle.
We first provide Matlab-based simulations and then CarSim-based simulations, to validate the robustness of the proposed protocol to more complex vehicle dynamics.
\begin{table}[h]
    \caption{Initial conditions}
   \label{table1}
    \begin{center}
    \begin{tabular}{|c|c|c|c|}
    \hline
    $i$ & $s_i(0)\ [m]$ & $v_i(0)\ [m/s]$ & $a_i(0)\ [m/s^2]$\\
    \hline
    $0$  & 0 & 10 & 0\\
    \hline
    $1$ & -2 & 12 & 0\\
    \hline
    $2$ & -4 & 8 & 0\\
    \hline
    $3$  & -6 & 11 & 0\\
    \hline
    $4$ & -8 & 10 & 0\\
    \hline
    \end{tabular}
    \end{center}
\end{table}

We first take vehicle dynamics as in (\ref{standard dyanamics}), where the time constants of each vehicle are taken as: $\tau_0=0.2,\ \tau_1=0.1,\ \tau_2=0.05,\ \tau_3=0.25,\ \tau_4=0.3$.
These values are chosen based on the literature reporting values of the time constants in the range we consider\cite{c1,c2,c3,c4,c5}.
The design parameters of the controllers are taken as $\theta_1=1,\ \theta_2=1,\ h=0.72,\ Q=0.69{\bf I},\ k_L=0,\ k_U=1,\ \theta=0.1, \kappa=0.25,\ \gamma=0.35,\ \gamma_{\rm C}=10,\ \gamma_{\rm CL}=0.3,\ \gamma_{\rm ICL}=68/\Delta t,\ \Delta t = 10^{-4}$ and $\tau_{\bar{f}} = 0.5$ as nominal powertrain time constant for the reference model.  

We consider five types of controllers:
\begin{enumerate}
    \item a non-adaptive controller as in (\ref{reference model input}) with incorrect knowledge of $\tau_i,\ i\in\{0,1,2,3,4,5\}$, which we select as $0.15$, 
    \item an adaptive controller (\ref{actual controller}) with MRAC design (\ref{standard MRAC adaptive law}),
    \item an adaptive controller (\ref{actual controller}) with CL-MRAC design (\ref{standard CL adaptive law}),
    \item an adaptive controller (\ref{actual controller}) with ICL-MRAC design (\ref{ICL adaptive law}),
    \item an adaptive controller (\ref{actual controller}) with C-MRAC design (\ref{Composite MRAC adaptive law}).
\end{enumerate}
The five scenarios are considered both in the presence of PE and in the absence of PE. 
To simulate PE, we let the leader vehicle take a sinusoidal acceleration, i.e., $u_0(t)=2\sin(t)+0.5\sin(0.5t),\ t\ge0$. 
To simulate absence PE, we let the leader proceed at constant velocity with $u_0(t)=0,\ t\ge0$. 
Then, to further illustrate the importance of convergence to the correct powertrain time constants, we let the leader vehicle take a piecewise acceleration, i.e., 
\begin{equation*}
u_0(t)=
\left\{
    \begin{aligned}
&0,\qquad\qquad\quad 0\le t\le3,\\
2\sin(t)&+0.5\sin(0.5t),\ t>3,    
\end{aligned}
\right.    
\end{equation*}
and stop updating the estimates at $t=3$s.
Let us remark that even when PE is absent, still RE, IRE and FE hold thanks to the non-zero initial conditions (for spacing error and relative velocity) that make the acceleration (or its time derivative) non-zero over some transient.

Fig.~\ref{ev_non-adaptive} shows spacing errors and accelerations of each vehicle with non-adaptive controller.
Fig.~\ref{e_adaptive} shows the spacing errors  with adaptive controllers.
In particular, Fig.~\ref{ev_non-adaptive}(a) shows that with incorrect knowledge of $\tau_i,\ i\in\{1,2,3,4\}$, disturbance decoupling is not achieved by a non-adaptive control (note that the spacing errors oscillate without converging to zero as an effect of the oscillating acceleration of the leading vehicle). 
The spacing errors in Fig.~\ref{ev_non-adaptive}(b) converge to zero even in the non-adaptive case because the leading vehicle converges to zero acceleration.
Concerning the adaptive controllers, Fig.~\ref{e_adaptive} shows that, both MRAC and C-MRAC achieve disturbance decoupling: the spacing errors converge to zero, no matter if the leading vehicle has non-zero or zero acceleration.
It can also be seen that all accelerations $a_i,\ i\in\{1,2,3,4\}$ satisfy PE in Fig.~\ref{ev_non-adaptive}(a) but do not satisfy PE in Fig.~\ref{ev_non-adaptive}(b).  

Fig.~\ref{tau_adaptive} shows the estimates of $\tau_i,\ i\in\{1,2,3,4\}$ with all adaptive controllers.
When PE is satisfied, all adaptive controllers can guarantee the correct estimation of powertrain time constants.
On the contrary, in the absence of PE, correct estimation is not achieved for standard MRAC.
Correct convergence to the true value is achieved in CL-MRAC, ICL-MRAC, C-MRAC because RE, IRE and FE all hold thanks to the non-zero acceleration over some transient.

Fig.~\ref{hybrid} shows the spacing errors of each vehicle and the estimates of $\tau_i,\ i\in\{1,2,3,4\}$ for standard MRAC and proposed C-MRAC.
It can be seen from Fig.~\ref{hybrid}(b) that C-MRAC correctly estimates the powertrain time constants within 1 second, despite the absence of PE, while MRAC cannot guarantee the correct estimation after 3 seconds.
After $t=3$s, when the update of the estimates is stopped, Fig.~\ref{hybrid}(a) shows that MRAC fails achieve disturbance decoupling as a result of the wrong estimation of the powertrain time constants.
In other words, Fig.~\ref{tau_adaptive} validates the well-known fact that standard MRAC suffers from lack of robustness due to convergence of the estimates to incorrect values\cite{c29,c30}.

\subsection{CarSim validation}
To test the robustness of the proposed ideas beyond the vehicle dynamics (\ref{standard dyanamics}), we simulate realistic vehicle dynamics by making use of CarSim.
The vehicle models used in CarSim cover high fidelity modules for handling, ride, vehicle stability, acceleration and braking  dynamics. 
Among the various modules, we make use of a low-level PID controller to make the actual acceleration follow the desired acceleration by controlling the output shaft torque of torque converter and the brake master cylinder pressure: such a low-level PID can be seen as a more realistic way to realize the powertrain time constant than the control-oriented vehicle model (\ref{standard dyanamics}).
The platoon is composed of 5 vehicles as in the previous experiments.
Refer to Table~\ref{table2} and Fig.~\ref{carsim e}(b) for the main parameters of the vehicles and refer to Table~\ref{table3} for the parameters of the PID controllers.

\begin{table}[h]
    \caption{Main CarSim parameters of the vehicles}
   \label{table2}
    \begin{center}
    \begin{tabular}{|c|c|c|c|}
    \hline
    $i$ & Type & Sprung mass [$kg$] & Tyres\\
    \hline
    $0$  & E-Class, Sedan & 1650 & 225/60 R18\\
    \hline
    $1$ & C-Class, Hatchback 2012 & 1270 & 225/15 R17\\
    \hline
    $2$ & Pickup, Compact: RWD & 1306 & 215/70 R15\\
    \hline
    $3$  & D-Class, Minivan 2012 & 1800 & 235/65 R17\\
    \hline
    $4$ & B-Class, Hatchback 2012 & 1110 & 185/65 R15\\
    \hline
    \end{tabular}
    \end{center}
\end{table}
\begin{table}[h]
    \caption{CarSim parameters of the PID controllers}
   \label{table3}
    \begin{center}
    \begin{tabular}{|c|c|c|c|}
    \hline
    Module & Proportional &  Integral & Derivative\\
    \hline
    Torque converter  & 0.5 & 0.2 & 0\\
    \hline
    \makecell{Brake master \\cylinder pressure} & 0.5 & 0.0005 & 0\\
    \hline
    \end{tabular}
    \end{center}
\end{table}
Fig.~\ref{carsim e} shows the spacing errors of each vehicle with non-adaptive controller and C-MRAC controller.
Even with the complex vehicle dynamics simulated by CarSim, Fig.~\ref{carsim e}(a) validates the theoretical results: the spacing errors in the non-adaptive case do not converge to zero, whereas the spacing errors with C-MRAC converge close to zero. 
The fact that convergence is not asymptotic is due to the fact that the CarSim vehicle dynamics are more complex than the linear dynamics (\ref{standard dyanamics}) used for control design: yet, one can say that C-MRAC achieves disturbance decoupling even in CarSim, thus validating the robustness of the results.

Fig.~\ref{carsim platoon} shows the longitudinal platoon in three conditions: (a) initial condition with $t=0\rm{s}$; (b) high velocity at time $t=19\rm{s}$; (c) low velocity at time $t=34\rm{s}$. 
Refer to Fig.~\ref{carsim platoon}(d) for accelerations and velocities of the platoon.
Fig.~\ref{carsim platoon}(a) shows that that the initial conditions are initially chosen such that the vehicles are very close: then, when the leading vehicle accelerates to a high velocity, Fig.~\ref{carsim platoon}(b) shows that the spacing error of each vehicle grows, and then the leader reaches a low velocity, Fig.~\ref{carsim platoon}(c) shows that the spacing error of each vehicle decreases. 
This is in accordance with the time headway spacing policy (\ref{spacing error}).

\section{Conclusions}
We proposed a novel design to longitudinal platooning in the framework of composite adaptation. 
The proposed design advances available platooning designs in three aspects: 
first, the condition for parameter convergence is relaxed from persistence of excitation of the acceleration or its derivative to finite excitation over a possibly short time interval; 
second, there is no need to store previous data samples nor to measure extra acceleration derivative states; 
third, it advances existing composite adaptation by dealing with uncertainty appears in the input matrix of the platooning error dynamics. 
Stability of the proposed design was studied analytically, and its robustness and practicality was verified with CarSim-based platooning experiments.
Interesting direction for future work is to study adaptation in the presence of discrete dynamics, such as gear switching  and/or communication losses\cite{c41,c42}. 
Further reducing the required measurements via output-feedback design is another possible future direction.

\bibliographystyle{IEEEtran}
\bibliography{reference}

\end{document}